\newcommand{\ToolName}[1]{\textsf{\small #1}}
\newcommand{\CGraph}{G}
\newcommand{\CGNodes}{V}
\newcommand{\CGVNodes}{V_\textup{vars}}
\newcommand{\CGCNodes}{V_\textup{clauses}}
\newcommand{\CGEdges}{E}
\newcommand{\CGLab}{L}
\newcommand{\Trivium}{{\sc Trivium}}
\newcommand{\UnConflictRule}{\ensuremath{\textsf{\small Conflict}}}
\newcommand{\UnConflictRuleP}{\ensuremath{\textsf{Conflict}}}
\newcommand{\TerBinRuleP}{\ensuremath{\textsf{$\oplus$-Unit$^3$}}}
\newcommand{\TerBinRule}{\ensuremath{\textsf{$\oplus${\small-Unit}$^3$}}}
\newcommand{\BinUnRuleP}{\ensuremath{\textsf{$\oplus$-Unit$^2$}}}
\newcommand{\BinUnRule}{\ensuremath{\textsf{$\oplus${\small-Unit}$^2$}}}
\newcommand{\TerUnRuleP}{\ensuremath{\textsf{$\oplus$-Imply}}}
\newcommand{\TerUnRule}{\ensuremath{\textsf{$\oplus$\small-Imply}}}
\newcommand{\ConflictRuleP}{\ensuremath{\textsf{$\oplus$-Conflict}}}
\newcommand{\ConflictRule}{\ensuremath{\textsf{$\oplus$\small-Conflict}}}
\newcommand{\EC}{{\ensuremath{\textup{\small\textsf{EC}}}}}
\newcommand{\XX}{{\oplus}}
\newcommand{\xor}{\oplus}
\newcommand{\parity}[1]{\ensuremath{p_{#1}}}
\newcommand{\ECsys}{\textsc{\textrm EC}}
\newcommand{\ECDeriv}{\mathrel{\vdash_{\textup{ec}}}}
\newcommand{\SUBST}{{\ensuremath{\textup{\small\textsf{Subst}}}}}
\newcommand{\SUBSTDeriv}{\mathrel{\vdash_{\textup{\textsf{Subst}}}}}
\newcommand{\SUBSTDerivNot}{\mathrel{{\not\,\vdash}_{\textup{\textsf{Subst}}}}}
\newcommand{\UP}{{\ensuremath{\textup{\small\textsf{UP}}}}}
\newcommand{\UPDeriv}{\mathrel{\vdash_{\textup{up}}}}
\newcommand{\UPDerivNot}{\mathrel{{\not\,\vdash}_{\textup{up}}}}
\newcommand{\set}[1]{\left\{ #1 \right\}}
\newcommand{\Set}[1]{\set{#1}}
\newcommand{\Setdef}[2]{\left\{{#1}\mid{#2}\right\}}
\newcommand{\Tuple}[1]{\ensuremath{{\left\langle{#1}\right\rangle}}}
\newcommand{\True}{\top}
\newcommand{\False}{\bot}
\newcommand{\T}{\top}
\newcommand{\F}{\bot}
\newcommand{\X}{\oplus}
\newcommand{\TA}{\tau}
\newcommand{\Equiv}{\equiv}
\newcommand{\Implies}{\Rightarrow}
\newcommand{\CNF}[1]{\operatorname{cnf}(#1)}
\newcommand{\XC}{D}
\newcommand{\substitution}[2]{#1/#2}
\newcommand{\simplification}[3]{\ensuremath{#1 \left[ \substitution{#2}{#3}\right] } }
\newcommand{\xorderivation}[1]{\ensuremath{\pi_{\mathit{#1}}}}
\newcommand{\ECderivation}{\ensuremath{\pi}}
\newcommand{\UPderivation}{\xorderivation{up}}
\newcommand{\SUBSTderivation}{\ensuremath{\pi}}
\newcommand{\unitruleP}{\ensuremath{\oplus\mbox{-Unit}^+}}
\newcommand{\unitruleN}{\ensuremath{\oplus\mbox{-Unit}^-}}
\newcommand{\eqvruleP}{\ensuremath{\oplus\mbox{-Eqv}^+}}
\newcommand{\eqvruleN}{\ensuremath{\oplus\mbox{-Eqv}^-}}
\newcommand{\clauses}{\ensuremath{\phi}}
\newcommand{\orclauses}{\clauses_{\textup{or}}}
\newcommand{\xorclauses}{\clauses_{\textup{xor}}}
\newcommand{\xorclausesI}[1]{\clauses_{\textup{xor},#1}}
\newcommand{\Var}{x}
\newcommand{\VarsOf}[1]{\operatorname{vars}(#1)}
\newcommand{\LitsOf}[1]{\operatorname{lits}(#1)}
\newcommand{\Models}{\ensuremath{\models}}
\newcommand{\ModelsNot}{\ensuremath{\mathrel{\not\models}}}
\newcommand{\Conseq}{\ensuremath{\models}}
\newcommand{\defrule}[2]{
\begin{tabular}{@{}c@{}}
        \ensuremath{#1} \\
    \hline
        \ensuremath{#2} 
    \end{tabular}}
\newcommand{\defrulecond}[3]{
\makebox{%
\begin{tabular}{@{}c@{}}
  \ensuremath{#1} \\
  provided that \ensuremath{#3}\\
  \hline
  \ensuremath{#2} 
\end{tabular}%
}}
\newcommand{\inferencerule}[3]{\defrule{{#1}\hspace{5mm}{#2}}{#3}}
\newcommand{\inferencerulen}[2]{\defrule{#1}{#2}}
\newcommand{\inferencerulenc}[3]{\defrulecond{#1}{#2}{#3}}
\newcommand{\Comment}[1]{\textsf{\footnotesize/*{#1}*/}}
\newcommand{\While}{\KW{while}}
\newcommand{\For}{\KW{for}}
\newcommand{\Each}{\KW{each}}
\newcommand{\If}{\KW{if}}
\newcommand{\Elif}{\KW{else if}}
\newcommand{\Else}{\KW{else}}
\newcommand{\Break}{\KW{break}}
\newcommand{\Return}{\KW{return}}
\newcommand{\KW}[1]{\textsf{\footnotesize{}{#1}}}
\newcommand{\Confl}{\mathit{confl}}
\newcommand{\IL}{\hat{l}}
\newcommand{\AL}{\tilde{l}}
\newcommand{\MyForm}{\ensuremath{\phi}}
\newcommand{\vars}{\operatorname{vars}}
\newcommand{\lits}{\operatorname{lits}}
\newcommand{\diamondcycles}{\ensuremath{D(n)}}
\begin{document}

\title{Classifying and Propagating Parity Constraints (extended version)\thanks{The original version of the paper~\cite{LJN:CP2012} has been presented in the 18th International Conference on Principle and Practice of Constraint Programming, CP 2012. An earlier version of the extended version has been presented for CP 2012 reviewers. This revised version uses proof techniques from \cite{LJN:ICTAI2012full}.}}

\author{Tero Laitinen, Tommi Junttila, and Ilkka Niemel\"a}
\institute{Aalto University\\
  Department of Information and Computer Science\\
  PO Box 15400, FI-00076 Aalto, Finland\\
  \email{\{Tero.Laitinen,Tommi.Junttila,Ilkka.Niemela\}@aalto.fi}
}

\maketitle

\begin{abstract}
Parity constraints, common in application domains such
as circuit verification, bounded model checking, and logical
cryptanalysis, are not necessarily most efficiently solved if translated
into conjunctive normal form. Thus, specialized parity reasoning
techniques have been developed in the past for propagating parity
constraints. This paper studies the questions of deciding whether
unit propagation or equivalence reasoning is enough to achieve full
propagation in a given parity constraint set. Efficient approximating
tests for answering these questions are developed. It is also shown
that equivalence reasoning can be simulated by unit propagation by
adding a polynomial amount of redundant parity constraints to the problem.
It is proven that without using additional variables, an exponential
number of new parity constraints would be needed in the worst case.
The presented classification and propagation methods are evaluated
experimentally.

\end{abstract}


%
%
\section{Introduction}

Encoding a problem instance in conjunctive normal form (CNF) allows very
efficient Boolean constraint propagation and conflict-driven clause learning
(CDCL) techniques.  
This has contributed to the success of propositional satisfiability (SAT)
solvers (see e.g.~\cite{Handbook:CDCL}) in a number of industrial
application domains. 
On the other hand,
an instance consisting only of parity (xor) constraints can be solved in
polynomial time using Gaussian elimination
but
CNF-based solvers relying only on basic Boolean constraint propagation
tend to scale poorly on the straightforward CNF-encoding of the instance.
To handle CNF instances including parity constraints,
common in application domains such as circuit verification, bounded model checking, and logical cryptanalysis, 
several approaches have been developed~\cite{Li:AAAI2000,Li:IPL2000,BaumgartnerMassacci:CL2000,Li:DAM2003,HeuleMaaren:SAT2004,HeuleEtAl:SAT2004,Chen:SAT2009,SoosEtAl:SAT2009,LJN:ECAI2010,Soos,LJN:ICTAI2011,LJN:SAT2012}.
These approaches extend CNF-level SAT solvers by
implementing different forms of constraint propagation for parity constraints,
ranging from plain unit propagation
via equivalence reasoning
to Gaussian elimination.
Compared to unit propagation,
which has efficient implementation techniques,
equivalence reasoning and Gaussian elimination allow stronger propagation but
are computationally much more costly.

In this paper our main goal is not to design new inference rules and data structures for propagation engines,
but to develop
(i) methods for analyzing the structure of parity constraints in order to detect how powerful a parity reasoning engine is needed to achieve full forward propagation,
and
(ii) translations that allow unit propagation to simulate equivalence reasoning.
We first present a method for detecting parity constraint sets for which unit
propagation achieves full forward propagation.
For instances that do not fall into this category,
we show how to extract easy-to-propagate parity constraint parts 
so that they can be handled by unit propagation and
the more powerful reasoning engines can take care of the rest.
We then describe a method for detecting parity constraint sets
for which equivalence reasoning achieves full forward propagation.
By analyzing the set of parity constraints as a constraint graph, we can
characterize equivalence reasoning using the cycles in the graph.
By enumerating these cycles and adding a new linear combination of the original
constraints for each such cycle to the instance,
we can achieve an instance in which unit propagation simulates equivalence reasoning.
As there may be an exponential number of such cycles,
we develop another translation to simulate equivalence reasoning with
unit propagation.
The translation is polynomial as new variables are introduced;
we prove that if introduction of new variables is not allowed,
then there are instance families for which polynomially sized simulation
translations do not exist.
This translation can be optimized significantly by adding only a selected
subset of the new parity constraints.
Even though the translation is meant to simulate equivalence reasoning with
unit propagation, it can augment the strength of equivalence reasoning if
equivalence reasoning does not achieve full forward propagation on the original
instance. 
The presented detection and translation methods are evaluated experimentally
on large sets of benchmark instances.
The proofs of lemmas and theorems can be found in the appendix.
%

\section{Preliminaries}

An \emph{atom} is either a propositional variable or
the special symbol $\top$ which denotes the constant ``true''.
A \emph{literal} is an atom $A$ or its negation $\neg A$;
we identify $\neg\top$ with $\bot$ and $\neg \neg A$ with $A$.
A traditional, non-exclusive \emph{or-clause} is a disjunction
$l_1 \vee \dots \vee l_n$ of literals.
Parity constraints are formally presented with xor-clauses:
an {\it xor-clause} is an expression of form $ l_1 \oplus \dots \oplus l_n$,
where $l_1,\dots,l_n$ are literals and the symbol $\oplus$ stands for the
exclusive logical or.
In the rest of the paper,
we implicitly assume that each xor-clause is in a \emph{normal form}
such that (i) each atom occurs at most once in it, and
(ii) all the literals in it are positive.
The unique (up to reordering of the atoms) normal form
for an xor-clause can be obtained by applying the following rewrite rules
in any order until saturation:
(i) ${{\neg A} \oplus C} \leadsto {A \oplus \top \oplus C}$, and
(ii) ${A \oplus A \oplus C} \leadsto C$,
where $C$ is a possibly empty xor-clause and $A$ is an atom.
For instance,
the normal form of ${\neg x_1} \oplus x_2 \oplus x_3 \oplus x_3$ is 
$x_1 \oplus x_2 \oplus \top$,
while the normal form of $x_1 \oplus x_1$ is the empty xor-clause $()$.
We say that an xor-clause is unary/binary/ternary if
its normal form has one/two/three variables, respectively.
We will identify $x \oplus \top$ with the literal $\neg x$.
For convenience, we can represent xor-clauses in equation form
$x_1 \X ... \X x_k \Equiv \parity{}$ with $\parity{} \in \Set{\F,\T}$;
e.g.,
$x_1 \oplus x_2$ is represented with $x_1 \oplus x_2 \Equiv \T$
and
$x_1 \oplus x_2 \oplus \top$ with $x_1 \oplus x_2 \Equiv \F$.
The straightforward CNF translation of an xor-clause $\XC$
is denoted by $\CNF{\XC}$;
for instance,
$\CNF{x_1 \X x_2 \X x_3 \X \T} =
 (\neg x_1 \lor \neg x_2 \lor \neg x_3) \land 
 (\neg x_1 \lor x_2 \lor x_3) \land 
 (x_1 \lor \neg x_2 \lor x_3) \land 
 (x_1 \lor x_2 \lor \neg x_3)$.
A {\it clause} is either an or-clause or an xor-clause.

A {\it truth assignment} $\TA$ is a set of literals such that
$\top \in \TA$ and $\forall l \in \TA : {\neg l} \notin \TA$.
We define the ``satisfies'' relation $\models$ between a truth assignment
$\TA$ and logical constructs as follows:
(i) if $l$ is a literal,
then $\TA \models l$ iff $l \in \TA$,
(ii) if $C = (l_1 \lor \dots \lor l_n)$ is an or-clause,
then $\TA \models C$ iff $\TA \models l_i$ for some $l_i \in \set{l_1,\ldots,l_n}$,
and
(iii)
if $C = (l_1 \oplus \dots \oplus l_n)$ is an xor-clause,
then $\TA \models C$ iff
$\TA$ is total for $C$ (i.e.~$\forall 1 \le i \le n : {l_i \in \TA} \lor {\neg l_i \in \TA}$)
and
$\TA \models l_i$ for an odd number of literals of $C$.
Observe that no truth assignment satisfies
the empty or-clause $()$ or the empty xor-clause $()$,
i.e.~these clauses are synonyms for $\bot$.

A \emph{cnf-xor formula} $\MyForm$ is a conjunction of clauses,
expressible as a conjunction
\begin{equation}
\MyForm = \orclauses \land \xorclauses,
\end{equation}
where $\orclauses$ is a conjunction of or-clauses and $\xorclauses$ is a conjunction of xor-clauses.
A truth assignment $\TA$ \emph{satisfies} $\MyForm$,
denoted by $\TA \models \MyForm$,
if it satisfies each clause in it;
$\MyForm$ is called \emph{satisfiable} if there exists such a truth assignment
satisfying it, and \emph{unsatisfiable} otherwise.
The \emph{cnf-xor satisfiability} problem studied in this paper is to decide
whether a given cnf-xor formula has a satisfying truth assignment.
A formula $\MyForm'$ is a \emph{logical consequence} of a formula $\MyForm$,
denoted by $\MyForm \Conseq \MyForm'$,
if $\TA \models \MyForm$ implies $\TA \models \MyForm'$
for all truth assignments $\TA$ that are total for $\MyForm$ and $\MyForm'$.
The set of variables occurring in a formula $\MyForm$ is denoted by
$\vars(\MyForm)$,
and 
$\lits(\MyForm) = \Setdef{x, \neg x}{x \in \vars(\MyForm)}$ is
the set of literals over $\vars(\MyForm)$.
We use $\simplification{C}{A}{D}$ to denote the (normal form) xor-clause that is
identical to $C$ except that all occurrences of the atom $A$ in $C$
are substituted with $D$ once.
For instance,
$\simplification{(x_1 \oplus x_2 \oplus x_3)}{x_1}{(x_1 \oplus x_3)} =
{x_1 \oplus x_3 \oplus x_2 \oplus x_3} =
{x_1 \oplus x_2}$.

\subsection{The DPLL(XOR) framework}

To separate parity constraint reasoning from the CNF-level reasoning,
we apply the recently introduced
DPLL(XOR) framework \cite{LJN:ECAI2010,LJN:ICTAI2011}.
The idea in the DPLL(XOR) framework
for satisfiability solving of cnf-xor formulas
$\MyForm = \orclauses \land \xorclauses$
is similar to that in the DPLL($T$) framework for solving
satisfiability of quantifier-free first-order formulas modulo
a background theory $T$
(SMT, see e.g.\ \cite{NieuwenhuisEtAl:JACM06,Handbook:SMT}).
In DPLL(XOR),
see Fig.~\ref{fig:DPLLXOR} for a high-level pseudo-code,
one employs a conflict-driven clause learning (CDCL) SAT solver
(see e.g.~\cite{Handbook:CDCL})
to search for a satisfying truth assignment $\TA$
over all the variables in $\MyForm = \orclauses \land \xorclauses$.
%
The CDCL-part takes care of the usual unit clause propagation on the cnf-part
$\orclauses$ of the formula (line 4 in Fig.~\ref{fig:DPLLXOR}),
conflict analysis and non-chronological backtracking (line 15--17),
and
heuristic selection of decision literals (lines 19--20) which
extend the current partial truth assignment $\TA$ towards a total one.

\begin{figure}[bt]
{\small
\begin{tabbing}
99.\={mm}\={mm}\={mm}\=\kill
solve($\MyForm = {\orclauses \land \xorclauses}$):\\
1.\>initialize xor-reasoning module $M$ with $\xorclauses$\\
2.\>$\TA = \Tuple{}$\qquad\Comment{the truth assignment}\\
3.\>\While{} true:\\
4.\>\>$(\TA',\Confl) = \textsc{unitprop}(\orclauses,\TA)$\quad\Comment{unit propagation}\\
5.\>\>\If{} not $\Confl$:\qquad\Comment{apply xor-reasoning}\\
6.\>\>\>\For{} \Each{} literal $l$ in $\TA'$ but not in $\TA$: $M$.\textsc{assign}($l$)\\
7.\>\>\>$(\IL_1,...,\IL_k) = M.\textsc{deduce}()$\\
8.\>\>\>\For{} $i=1$ to $k$:\\
9.\>\>\>\>$C = M.\textsc{explain}(\IL_i)$\\
10.\>\>\>\>\If{} $\IL_i = \False$ or ${\neg \IL_i} \in \TA'$: $\Confl = C$, \Break{}\\
11.\>\>\>\>\Elif{} $\IL_i \notin \TA'$: add $\IL_i^{C}$ to $\TA'$\\
12.\>\>\>\If{} $k > 0$ and not $\Confl$:\\
13.\>\>\>\>$\TA = \TA'$; continue\quad\Comment{unit propagate further}\\
14.\>\>let $\TA = \TA'$\\
15.\>\>\If{} $\Confl$:\qquad\Comment{standard Boolean conflict analysis}\\
16.\>\>\>analyze conflict, learn a conflict clause\\
17.\>\>\>backjump or return ``unsatisfiable'' if not possible\\
18.\>\>\Else{}: \\
19.\>\>\>add a heuristically selected unassigned literal in $\MyForm$ to $\TA$\\
20.\>\>\>or return ``satisfiable'' if no such variable exists
\end{tabbing}%
}%
\vspace{-2mm}
\caption{The essential skeleton of the DPLL(XOR) framework}%
\label{fig:DPLLXOR}%
\end{figure}%

To handle the parity constraints in the xor-part $\xorclauses$,
an \emph{xor-reasoning module} $M$ is coupled with the CDCL solver.
The values assigned in $\TA$ to the variables in $\VarsOf{\xorclauses}$
by the CDCL solver are communicated as \emph{xor-assumption literals}
to the module (with the \textsc{assign} method on line 6 of the pseudo-code).
If $\AL_1,...,\AL_m$ are the xor-assumptions communicated to the module so far,
then the \textsc{deduce} method (invoked on line 7) of the module
is used to deduce a (possibly empty) list of \emph{xor-implied literals}
$\IL$ that are logical consequences of the xor-part $\xorclauses$ and
xor-assumptions,
i.e.~literals for which ${\xorclauses \land \AL_1 \land ... \land \AL_m} \Models \IL$ holds.
These xor-implied literals can then be added to the current truth assignment
$\TA$ (line 11) and the CDCL part invoked again to perform unit clause
propagation on these.
The conflict analysis engine of CDCL solvers requires that
each implied (i.e.~non-decision) literal has an \emph{implying clause},
i.e.~an or-clause that forces the value of the literal by unit propagation
on the values of literals appearing earlier in
the truth assignment (which at the implementation level is a sequence of
literals instead of a set).
For this purpose
the xor-reasoning module has a method \textsc{explain} that,
for each xor-implied literal $\IL$,
gives an or-clause $C$ of form ${l'_1 \land ... \land l'_k} \Implies \IL$,
i.e.~${\neg l'_1} \lor ... \lor {\neg l'_k} \lor \IL$,
such that
(i) $C$ is a logical consequence of $\xorclauses$, 
and
(ii) $l'_1,...,l'_k$ are xor-assumptions made or xor-implied literals returned
before $\IL$.
An important special case occurs when the ``false'' literal $\False$
is returned as an xor-implied literal (line 10),
i.e.~when an \emph{xor-conflict} occurs;
this implies that ${\xorclauses \land \AL_1 \land ... \land \AL_m}$
is unsatisfiable.
In such a case, the clause returned by the \textsc{explain} method
is used as the unsatisfied clause $\Confl$ initiating the conflict analysis
engine of the CDCL part (lines 10 and 15--17).
\emph{In this paper,
we study the process of deriving xor-implied literals} and
will not describe in detail how implying or-clauses are computed;
the reader is referred to~\cite{LJN:ECAI2010,LJN:ICTAI2011,LJN:SAT2012}.

Naturally,
there are many \emph{xor-module integration strategies}
that can be considered in addition to the one described in
the above pseudo-code.
For instance,
if one wants to prioritize xor-reasoning,
the xor-assumptions can be given one-by-one instead.
Similarly, if CNF reasoning is to be prioritized, the xor-reasoning module
can lazily compute and return the xor-implied literals one-by-one only
when the next one is requested.

In addition to our previous work~\cite{LJN:ECAI2010,LJN:ICTAI2011,LJN:SAT2012},
also cryptominisat~\cite{SoosEtAl:SAT2009,Soos} can be seen to follow
this framework.

\section{Unit Propagation}

We first consider the problem of deciding, given an xor-clause conjunction,
whether the elementary unit propagation technique is enough
for always deducing all xor-implied literals.
As we will see, this is actually the case for many ``real-world'' instances.
The cnf-xor instances having such xor-clause conjunctions are probably best handled either by translating the xor-part into CNF or with unit propagation algorithms on parity constraints~\cite{Chen:SAT2009,SoosEtAl:SAT2009,LJN:SAT2012}
instead of more complex xor-reasoning techniques.

To study unit propagation on xor-clauses,
we introduce a very simple xor-reasoning system ``\UP{}''
that can only deduce the same xor-implied literals as CNF-level unit
propagation would on the straightforward CNF translation of the xor-clauses. 
To do this, \UP{} implements the deduction system with 
the inference rules shown in Fig.~\ref{Fig:UPRules}.
A {\it UP-derivation} from a conjunction of xor-clauses $\psi$ is a sequence of
xor-clauses $\XC_1, \dots, \XC_n$ where each $\XC_i$ is either (i) in $\psi$, or
(ii) derived from two xor-clauses $\XC_j$, $\XC_k$ with $1 \le j < k < i$
using the inference rule $\unitruleP$ or $\unitruleN$.
An xor-clause $\XC$ is {\it UP-derivable} from $\psi$,
denoted $\psi \UPDeriv \XC$,
if there exists a UP-derivation from $\psi$ where $\XC$ occurs.
As an example,
let $\xorclauses = (a \X d \X e) \land (d \X c \X f) \land (a \X b \X c)$.
Fig.~\ref{Fig:UPCG}(a)
illustrates a \UP-derivation from
$\xorclauses \land (a) \land (\neg d)$;
as $\neg e$ occurs in it,
$\xorclauses \land (a) \land (\neg d) \UPDeriv \neg e$
and
thus unit propagation can deduce the xor-implied literal $\neg e$
under the xor-assumptions $(a)$ and $(\neg d)$.

\begin{figure}[t]
  \centering
  \small
  \begin{tabular}{l@{\qquad}l}
    \unitruleP:\ 
    $\inferencerule{x}{C}{\simplification{C}{x}{\top}}$
    &
    \unitruleN:\ 
    $\inferencerule{x \oplus \True}{C}{\simplification{C}{x}{\False}}$
    \\
  \end{tabular}%
  \caption{Inference rules of \UP{}; 
          The symbol $x$ is variable and $C$ is an xor-clause.}
  \label{Fig:UPRules}%
\end{figure}

\begin{definition}
  A conjunction $\xorclauses$ of xor-clauses is \emph{\UP-deducible}
  if for all $\AL_1,...,\AL_k,\IL \in \LitsOf{\xorclauses}$
  it holds that
  (i) if
  $\xorclauses \land \AL_1 \land ... \land \AL_k$ is unsatisfiable,
  then
  $\xorclauses \land \AL_1 \land ... \land \AL_k \UPDeriv \F$,
  and
  (ii)
  ${\xorclauses \land \AL_1 \land ... \land \AL_k} \Models \IL$
  implies
  ${\xorclauses \land \AL_1 \land ... \land \AL_k} \UPDeriv \IL$
  otherwise.
\end{definition}

Unfortunately we do not know any easy way of detecting whether
a given xor-clause conjunction is \UP-deducible.
However, as proven next, xor-clause conjunctions that are ``tree-like'',
an easy to test structural property,
are \UP-deducible.
For this, and also later,
we use the quite standard concept of constraint graphs:
the \emph{constraint graph} of an xor-clause conjunction $\xorclauses$
is a labeled bipartite graph $\CGraph = \Tuple{\CGNodes, \CGEdges, \CGLab}$,
where
\begin{itemize}
\item
  the set of vertices $\CGNodes$
  is the disjoint union of
  (i)
    \emph{variable vertices} $\CGVNodes = \VarsOf{\xorclauses}$
    which are graphically represented with circles,
    and
  (ii)
    \emph{xor-clause vertices} $\CGCNodes = \Setdef{\XC}{\text{$\XC$ is an xor-clause in $\xorclauses$}}$ drawn as rectangles,
\item
  $\CGEdges = {\Setdef{\Set{\Var,\XC}}{{\Var \in \CGVNodes} \land {\XC \in \CGCNodes} \land {\Var \in \VarsOf{\XC}}}}$
  are
  the edges connecting the variables
  and
  the xor-clauses in which they occur,
  and
\item
  $\CGLab$ labels each xor-clause vertex $x_1 \X ... \X x_k \Equiv \parity{}$
  with the parity $\parity{}$.
\end{itemize}
A conjunction $\xorclauses$ is \emph{tree-like}
if its constraint graph is a tree
or
a union of disjoint trees.

\begin{example}\label{Ex:Tree1}
  The conjunction
  $(a \X b \X c) \land (b \X d \X e) \land (c \X f \X g \X \T)$
  is tree-like;
  its constraint graph is given in Fig.~\ref{Fig:UPCG}(b).
  On the other hand,
  the conjunction
  $(a \X b \X c) \land (a \X d \X e) \land (c \X d \X f) \land (b \X e \X f)$,
  illustrated in Fig.~\ref{Fig:UPCG}(c), is not tree-like.
\end{example}

\begin{figure}[t]
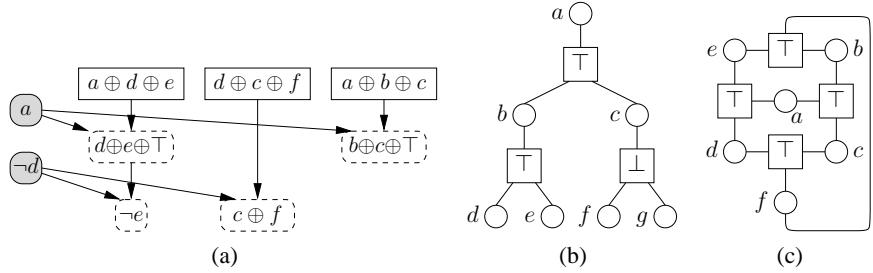

  \centering
  \begin{tabular}{@{}c@{\quad}c@{\quad}c@{}}
    \includegraphics[scale=0.70]{Figures/up-deriv} &
    \includegraphics[scale=0.75]{Figures/tree1} &
    \includegraphics[scale=0.75]{Figures/tree4}
    \\
    (a) & (b) & (c)%
  \end{tabular}%
  \caption{A \UP-derivation and two constraint graphs}
  \label{Fig:UPCG}
\end{figure}

\begin{theorem}
  If a conjunction of xor-clauses $\xorclauses$ is tree-like,
  then it is \UP-deducible.
  \label{Theorem:Trees}
\end{theorem}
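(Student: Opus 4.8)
The plan is to prove completeness of \UP{} on tree-like instances by induction on the number of xor-clauses, after two preliminary reductions. First I would record the easy \emph{soundness} direction: each application of \unitruleP{} or \unitruleN{} substitutes a variable whose value is already fixed by a unary premise, so every UP-derivable xor-clause is a logical consequence of its premises; in particular a derived unary xor-clause always records the semantically forced value, which lets me match ``derived'' against ``implied'' by direct parity computation. Second, since the constraint graph of a tree-like $\xorclauses$ is a disjoint union of trees and distinct components share no variables, both the consequence relation and the UP-derivations split across components, so it suffices to treat a single connected tree. Throughout I would use that UP-derivations \emph{compose}: if $\xorclauses \UPDeriv \XC$ and $\XC$ is then used as a premise to derive $\XC'$, the two derivations concatenate, so any derivation obtained from a smaller instance augmented by an already-derived unary fact lifts back to a derivation from $\xorclauses$ together with the assumptions.

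For the base case, a single xor-clause $x_1 \X \dots \X x_n \Equiv \parity{}$, I would check the definition directly against the assumptions $\AL_1,\dots,\AL_k$: if at most $n-2$ variables are assigned, no variable is forced and (ii) holds vacuously while each assumption is itself a premise and hence trivially derivable; if exactly $n-1$ are assigned, repeated substitution reduces the clause to a unary clause fixing the last variable, which is the implied literal; and if all $n$ are assigned, either the assignment respects $\parity{}$ (only the assumptions are consequences) or it violates it, in which case substituting all assigned values yields the empty xor-clause $\F$, discharging (i).

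For the inductive step I would root the connected tree at a variable vertex and pick a clause vertex $C$ of maximum depth. Every child variable of $C$ then occurs only in $C$ (otherwise a deeper clause would exist), so $C$ has the form $y \X z_1 \X \dots \X z_m \Equiv \parity{}$ where $z_1,\dots,z_m$ are \emph{private} and $y$ is the unique variable shared with the rest of the tree. Deleting $C$ (and the now-isolated $z_i$) leaves a tree-like $\xorclauses'$ with $y \in \VarsOf{\xorclauses'}$, to which the induction hypothesis applies. Given assumptions over $\LitsOf{\xorclauses}$, I would split them into those over $\VarsOf{\xorclauses'}$ (which include any assumption on $y$) and those over the $z_i$, and then argue by how many $z_i$ are assigned: (a) if two or more $z_i$ are free, then $C$ is satisfiable for every value of $y$ and imposes no constraint crossing into $\xorclauses'$, so all consequences and unsatisfiability come from $\xorclauses'$ and are handled by the hypothesis, the assigned $z_i$ being derivable as premises; (b) if exactly one $z_i$ is free, $C$ ties that $z_i$ to $y$, so whenever the hypothesis makes \UP{} derive the forced value of $y$, one further substitution into $C$ derives the forced value of the free $z_i$; and (c) if all $z_i$ are assigned, substituting them into $C$ first derives a unary fact $y \Equiv v$, after which the hypothesis applied to $\xorclauses'$ augmented by this fact yields every remaining implied literal and detects $\F$.

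I expect the crux to be case (c) together with the bidirectional flow of information across $C$: I must verify that the reduced constraint $C$ and the private assumptions contribute to $\xorclauses'$ is \emph{exactly} the single unary fact on $y$ (no more, no less), that \UP{} can always derive this fact before invoking the hypothesis, and symmetrically that a value of $y$ forced inside $\xorclauses'$ propagates back through $C$ to the private variables --- while in every branch ensuring that an unsatisfiable instance is not merely detected abstractly but actually yields the empty xor-clause $\F$ under the two inference rules. Checking that these derived and implied values coincide is where the soundness remark and careful parity bookkeeping are used.
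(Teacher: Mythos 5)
Your proposal is correct, and it rests on the same induction as the paper's proof (on the number of xor-clauses, removing one clause, invoking the hypothesis on the remainder, and composing \UP-derivations across the interface), but your decomposition is genuinely simpler. The paper removes an \emph{arbitrary} clause $\XC$, so the remainder splits into $q$ variable-disjoint clusters $\xorclausesI{1}',\ldots,\xorclausesI{q}'$, each sharing exactly one variable with $\XC$, and its case analysis must track information flowing from several clusters into $\XC$ and back out into another: in its hardest case, clusters $1,\ldots,q-1$ each force their interface variable, those unary facts are pushed through $\XC$ to obtain $x'_q \X \parity{q}$, and the hypothesis is re-invoked on $\xorclausesI{q}'$ augmented with that fact. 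By rooting the tree and deleting a clause vertex $C$ of maximum depth, you instead guarantee that $C$ meets the rest of the formula in exactly one variable $y$ and that all of its other variables are private; the remainder stays connected, only a single unary fact on $y$ ever crosses the interface (in one direction or the other), and the paper's multi-cluster bookkeeping collapses into your three cases (a)--(c) on the number of unassigned private variables. The cost is the (trivial) rooting and maximum-depth observation; the gain is that your crux --- the interface carries exactly one unary fact, no more and no less --- is the one-cluster special case of the claim the paper must establish for $q$ clusters simultaneously. Two details to tighten in a full write-up: in case (b) the free private variable is obtained by substituting into $C$ the derived value of $y$ \emph{and} all assigned private variables (not ``one further substitution''), and in every branch the unsatisfiable case should be discharged explicitly by noting that contradictory unary xor-clauses yield $\F$ under \unitruleP{}/\unitruleN{} (the paper itself dismisses this with ``can be proven similarly'').
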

Note that not all \UP-deducible xor-clause constraints are tree-like.
For instance,
$(a \X b) \land (b \X c) \land (c \X a \X \top)$
is satisfiable and \UP-deducible but
not tree-like.
No binary xor-clauses are needed to establish the same, e.g.,
$(a \X b \X c) \land (a \X d \X e) \land (c \X d \X f) \land (b \X e \X f)$
considered in Ex.~\ref{Ex:Tree1}
is satisfiable and \UP-deducible but not tree-like.

%
%
\subsection{Experimental Evaluation}
\label{Sect:TreeClassification}

To evaluate the relevance of this tree-like classification,
we studied the benchmark instances in ``crafted'' and ``industrial/application''
categories of the SAT Competitions 2005, 2007, and 2009 as well as
all the instances in the SAT Competition 2011
(available at \url{http://www.satcompetition.org/}).
We applied the xor-clause extraction algorithm described in~\cite{Soos} to
these CNF instances and found a large number of instances with xor-clauses.
To get rid of some ``trivial'' xor-clauses,
we eliminated unary clauses and binary xor-clauses from each instance
by unit propagation and substitution, respectively.
After this easy preprocessing,
474 instances (with some duplicates due to overlap in the competitions)
having xor-clauses remained.
Of these instances, 61 are tree-like.

As shown earlier,
there are \UP{}-deducible cnf-xor instances that are not tree-like.
To find out whether any of the 413 non-tree-like cnf-xor instances we found
falls into this category,
we applied the following testing procedure to each instance:
randomly generate xor-assumption sets and for each check,
with Gaussian elimination,
whether all xor-implied literals were propagated by unit propagation.
For only one of the 413 non-tree-like cnf-xor instances
the random testing could not prove that it is not \UP-deducible;
thus the tree-like classification seems to work quite well in practice as an
approximation of detecting \UP{}-deducibility.
More detailed results are shown in Fig.~\ref{Fig:Classification}(a).
The columns ``probably \SUBST'' and ``cycle-partitionable'' are explained later.

\begin{figure}[tb]
  \centering
  \begin{tabular}{@{}c@{\quad}c@{}}
    \begin{tabular}{|r@{\ }|@{\ }c@{\ }|@{\ }c@{\ }|@{\ }c@{\ }|@{\ }c@{\ }|}
      \hline
      & \multicolumn{4}{c|}{SAT Competition}\\
      & 2005 & 2007 & 2009 & 2011 \\
      \hline\hline
      instances            & 857 & 376 & 573 & 1200 \\
      with xors            & 123 & 100  & 140 & 111 \\
      \hline
      probably \UP{}       & 19  & 10 & 18 & 15\\
      tree-like            & 19  & 9  & 18 & 15 \\
      \hline
      probably \SUBST{} & 20  & 21 & 52 & 40 \\
      cycle-partitionable  & 20  & 13 & 24 & 40 \\
      \hline
    \end{tabular}
    &
    \begin{minipage}{.50\textwidth}
      \includegraphics[width=.99\textwidth]{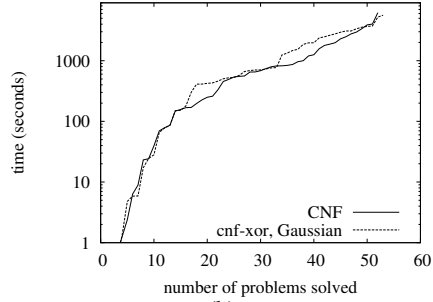}
    \end{minipage}
    \\
    (a) & (b)
  \end{tabular}
  \caption{Instance classification (a), and cryptominisat run-times on tree-like instances (b)}
  \label{Fig:Classification}
\end{figure}

As unit propagation is already complete for tree-like cnf-xor instances,
it is to be expected that the more complex parity reasoning methods do not help
on such instances.
To evaluate whether this is the case,
we ran cryptominisat 2.9.2~\cite{SoosEtAl:SAT2009,Soos} on the 61 tree-like cnf-xor instances mentioned above in two modes:
(i) parity reasoning disabled with CNF input,
and
(i) parity reasoning enabled with cnf-xor form input and full Gaussian elimination.
The results in Fig.~\ref{Fig:Classification}(b) show that in this setting
it is beneficial to use CNF-level unit propagation instead of
the computationally more expensive Gaussian elimination method.

%
%
\subsection{Clausification of Tree-Like Parts}

As observed above,
a substantial number of real-world cnf-xor instances are not tree-like.
However, in many cases a large portion of the xor-clauses may appear in
tree-like parts of $\xorclauses$.
As an example,
consider the xor-clause conjunction $\xorclauses$
having the constraint graph shown in Fig.~\ref{Fig:TreeReduction}(a).
It is not \UP-deducible as ${\xorclauses \land a \land {\neg j}} \Models {e}$
but
${\xorclauses \land a \land {\neg j}} \UPDerivNot {e}$.
The xor-clauses $(i)$, $(g \X h \X i \X \T)$, $(e \X f \X g)$, and
$(d \X k \X m \X \T)$ form the tree-like part of $\xorclauses$.
\newcommand{\TreePart}[1]{\operatorname{treepart}(#1)}
Formally
the \emph{tree-like part of $\xorclauses$}, denoted by $\TreePart{\xorclauses}$,
can be defined recursively as follows:
(i) if there is a $\XC = (x_1 \X ... \X x_n \X \parity{})$ with $n \ge 1$ in $\xorclauses$ and an $n-1$-subset $W$ of $\Set{x_1,...,x_n}$ such that
each $x_i \in W$ appears only in $\XC$,
then $\TreePart{\xorclauses} = \Set{\XC}\cup\TreePart{\xorclauses\setminus\XC}$,
and
(ii) $\TreePart{\xorclauses} = \emptyset$ otherwise.

\begin{figure}[b]
  \centering
  \begin{tabular}{@{}c@{\qquad}c@{}}
    \includegraphics[scale=0.8]{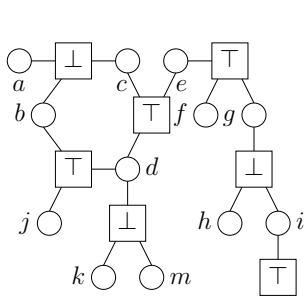}
    &
    \includegraphics[width=.5\textwidth]{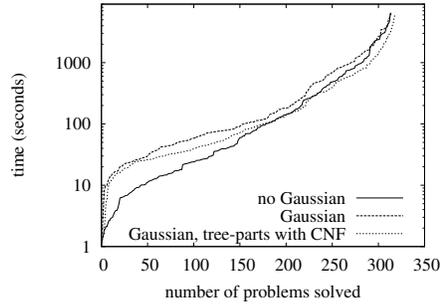}
    \\
    (a) & (b)
  \end{tabular}%
  \caption{A constraint graph (a), and run-times of cryptominisat on Hitag2 instances (b)}
  \label{Fig:TreeReduction}
\end{figure}

One can exploit such tree-like parts by applying only unit propagation on them
and
letting the more powerful but expensive xor-reasoning engines take care only of
the non-tree-like parts.
Sometimes such a strategy can lead to improvements in run time.
For example,
consider a set of 320 cnf-xor instances modeling known-plaintext attack on
Hitag2 cipher with 30--39 stream bits given.
These instances typically have 2600--3300 xor-clauses, of which roughly
one fourth are in the tree-like part.
Figure~\ref{Fig:TreeReduction}(b) shows the result when we run
cryptominisat 2.9.2~\cite{SoosEtAl:SAT2009,Soos} on
these instances with three configurations:
(i) Gaussian elimination disabled,
(ii) Gaussian elimination enabled,
and
(iii) Gaussian elimination enabled and the tree-like parts translated into CNF.
On these instances, applying the relatively costly Gaussian elimination
to non-tree-like parts only is clearly beneficial on the harder instances,
probably due to the fact that the Gaussian elimination matrices become smaller.
Smaller matrices consume less memory, are faster to manipulate, and can also give smaller xor-explanations for xor-implied literals.
On some other benchmark sets,
no improvements are obtained as
instances can contain no tree-like parts (e.g.~our instances modeling known-plaintext attack on \Trivium{} cipher)
or
the tree-like parts can be very small (e.g.~similar instances on the Grain cipher).
In addition, the effect is solver and xor-reasoning module dependent: we obtained no run time improvement with the solver of~\cite{LJN:ECAI2010}
applying equivalence reasoning.

We also ran the same cryptominisat configurations on all the
413 above mentioned non-tree-like SAT Competition benchmark instances.
The instances have a large number of xor-clauses (the largest number is 312707)
and
Fig.~\ref{Fig:SATCOMPTreeParts}(a) illustrates the relative tree-like part sizes.
As we can see,
a substantial amount of instances have a very significant tree-like part.
Figure~\ref{Fig:SATCOMPTreeParts}(b) shows the run-time results,
illustrating that applying Gaussian elimination on non-tree-like instances
can bring huge run-time improvements.
However,
one cannot unconditionally recommend using Gaussian elimination on
non-tree-like instances as on some instances,
especially in the ``industrial'' category,
the run-time penalty of Gaussian elimination was also huge.
Clausification of tree-like parts brings quite consistent improvement
in this setting.

\begin{figure}[tb]
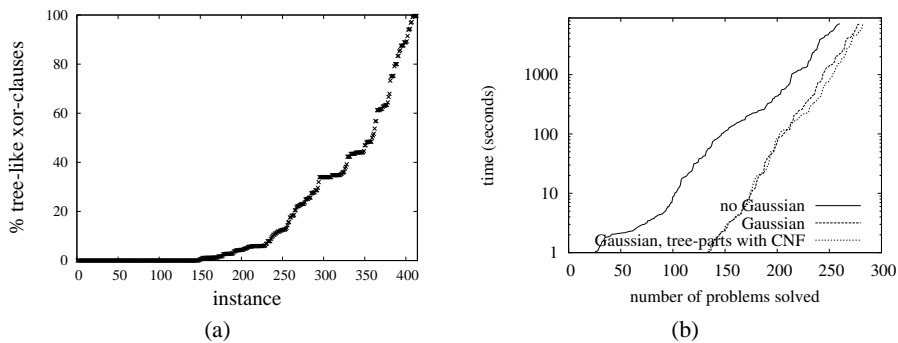

  \centering
  \begin{tabular}{@{}c@{\quad}c@{}}
    \includegraphics[scale=0.465]{Experiments-TreePart-Size/tree-percents} &
    \includegraphics[scale=0.465]{GeneratedPlots/cryptominisat_nontrees}
    \\
    (a) & (b)
  \end{tabular}%
  \caption{Relative tree-like part sizes and run-times of non-tree-like instances.}
  \label{Fig:SATCOMPTreeParts}
\end{figure}

\section{Equivalence Reasoning}

\newcommand{\xuptrans}[1]{\ensuremath{\mathit{cycles}(#1)}}
\newcommand{\xorcycle}[3]{\ensuremath{\mathit{XC}(\Tuple{#1},\Tuple{#2},#3)}} 
\newcommand{\xorcycleS}[3]{\ensuremath{\mathit{XC}(#1,#2,#3)}} 
\newcommand{\VS}{\hspace{2mm}}
\newcommand{\VSS}{\hspace{2.5mm}}

As observed in the previous section,
unit propagation is not enough for deducing all xor-implied literals
on many practical cnf-xor instances.
We next perform a similar study for a stronger deduction system,
a form of equivalence reasoning~\cite{LJN:ECAI2010,LJN:ICTAI2011}.
Although it cannot deduce all xor-implied literals either,
on many problems it can deduce more
and
has been shown to be effective on some instance families.
The look-ahead based solvers \ToolName{EqSatz}~\cite{Li:AAAI2000} and
\ToolName{march\_eq}~\cite{HeuleEtAl:SAT2004} apply same kind of,
but not exactly the same,
equivalence reasoning we consider here.

To study equivalence reasoning on xor-clauses,
we introduce two equally powerful xor-reasoning systems:
``\SUBST{}''~\cite{LJN:ECAI2010} and ``\EC{}''~\cite{LJN:ICTAI2011}.
The first is simpler to implement and to present while
the second works here as a tool for analyzing
the structure of xor-clauses with respect to equivalence reasoning.
The ``\SUBST{}'' system simply adds two substitution rules to \UP{}:
\[
\begin{tabular}{cc@{\quad}c}
 \eqvruleP:\
 \inferencerule{x \oplus y \oplus \top}{C}{\simplification{C}{x}{y}}
 &
 \text{and}
 &
 \eqvruleN:\ 
 \inferencerule{x \oplus y}{C}{\simplification{C}{x}{y \oplus\top}}
\end{tabular}
\]
The ``\EC'' system,
standing for Equivalence Class based parity reasoning,
has the inference rules shown in Fig.~\ref{Fig:ECRules}.
\begin{figure}[b]
  \centering
  \small
  \begin{tabular}{@{}c@{\VSS}c@{\VSS}c@{}}
    $\inferencerule{x}{x \XX \True}{\False}$
    &
    $\inferencerule{x \XX \parity{1}}{x \XX y \XX \parity{2}}{y \xor (\parity{1} \XX \parity{2} \XX \top)}$
    &
    $\inferencerulen{{x_1 \XX x_2 \XX \parity{1} \XX \top}\VS\dots\VS{x_{n-1} \XX x_{n} \XX \parity{n-1} \XX \top}\VSS{x_1 \XX x_n \XX y \XX \parity{}}}
    {y \xor (\parity{1} \xor \parity{2} \xor \dots \xor \parity{n-1} \xor \parity{})} $
    \vspace{1mm}
    \\
    {\footnotesize (a) \UnConflictRuleP} &
    {\footnotesize (b) \BinUnRuleP} &
    {\footnotesize (d) \TerUnRuleP}
    \\
    \\
    \multicolumn{2}{@{}c@{}}{$\inferencerule{x \XX \parity{1}}{x \XX y \XX z \XX \parity{2}}{y \xor z \xor (\parity{1} \XX \parity{2} \XX \True)}$}
    &
    $\inferencerulenc{{x_1 \XX x_2 \XX \parity{1} \XX \top}\VS...\VS{x_{n-1} \XX x_n \XX \parity{n-1} \XX \top}\VSS{x_{n} \XX x_{1} \XX \parity{n} \XX \top}}
                {\bot}
                {\parity{1} \XX \dots \XX \parity{n} {=} \top}$
    \\
    \multicolumn{2}{@{}c@{}}{\footnotesize (c) \TerBinRuleP} &
    {\footnotesize (e) \ConflictRuleP}%
  \end{tabular}%
  \caption{Inference rules of \ECsys; the symbols $x,x_i,y,z$ are all variables while the $\parity{i}$ symbols are constants $\False$ or $\True$.}%
  \label{Fig:ECRules}%
\end{figure}
As there are no inference rules for xor-clauses with more than three variables,
longer xor-clauses have to be eliminated, e.g.,
by repeatedly applying the rewrite rule
$(x_1 \X x_2 \X \dots \X x_n) \leadsto {(x_1 \X x_2 \X y) \land (\neg y \X x_3 \X ... \X x_n)}$,
where $y$ is a fresh variable not occurring in other clauses.
We define \SUBST{}- and \EC-derivations, the relations $\SUBSTDeriv$ and $\ECDeriv$,
as well as \SUBST- and \EC-deducibility 
similarly to \UP-derivations, $\UPDeriv$, and \UP-deducibility, respectively.
\begin{example}
  Figure~\ref{Fig:Derivs} shows (parts of) \SUBST- and \EC-derivations
  from $\xorclauses \land (a) \land (\neg j)$,
  where $\xorclauses$ is the xor-clause conjunction shown in Fig.~\ref{Fig:TreeReduction}(a).
\end{example}
As shown in~\cite{LJN:ICTAI2011},
on cnf-xor instances with xor-clauses having at most three variables,
\SUBST{} and \EC{} can deduce exactly the same xor-implied literals.
Thus, such an instance $\xorclauses$ is
\SUBST-deducible if and only if it is \EC-deducible.

\begin{figure}[tb]
  \centering
  \begin{tabular}{c@{\qquad}c}
    \includegraphics[width=.45\textwidth]{Figures/subst-deriv}
    &
    \includegraphics[width=.45\textwidth]{Figures/ec-deriv}
    \\
    (a) a \SUBST-derivation
    &
    (b) an \EC-derivation
  \end{tabular}
  \caption{\SUBST- and \EC-derivations from $\xorclauses \land (a) \land (\neg j)$, where $\xorclauses$ is given in Fig.~\ref{Fig:TreeReduction}(a)}%
  \label{Fig:Derivs}%
\end{figure}

The \EC{}-system uses more complicated inference rules than \SUBST{},
but it allows us to characterize equivalence reasoning as
a structural property of constraint graphs.
The \EC{} rules $\UnConflictRule$,
$\BinUnRule$, and $\TerBinRule$  are for unit propagation on xor-clauses with
1--3 variables,
and the rules $\TerUnRule$ and $\ConflictRule$ for equivalence reasoning.
To simplify the following proofs and translations,
we consider a restricted class of xor-clauses.
A conjunction of xor-clauses $\xorclauses$ is in {\it 3-xor normal form} if
(i) every xor-clause in it has exactly three variables,
and
(ii) each pair of xor-clauses shares at most one variable.
Given a $\xorclauses$,
an equi-satisfiable 3-xor normal form formula can be obtained by
(i) eliminating unary and binary xor-clauses by unit propagation and substitution,
(ii) cutting longer xor-clauses as described above,
and
(iii) applying the following rewrite rule:
$(x_1 \X x_2 \X x_3) \land (x_2 \X x_3 \X x_4) \leadsto
(x_1 \X x_2 \X x_3) \land (x_1 \X x_4 \X \T)$.
In this normal form, $\ConflictRule$ is actually a shorthand
for two applications of $\TerUnRule$ and
one application of $\UnConflictRule$,
so the rule $\TerUnRule$ succinctly characterizes equivalence reasoning.
We now prove that the rule $\TerUnRule$ 
is closely related to the cycles in the constraint graphs.
An {\it xor-cycle} is an xor-clause conjunction of form
$(x_1 \XX x_2 \XX y_1 \equiv \parity{1}) \wedge\dots\wedge(x_{n-1} \XX  x_{n} \XX y_{n-1} \equiv \parity{n-1})\wedge(x_1 \XX x_n \XX y_n \equiv \parity{n})$,
abbreviated with $\xorcycle{x_1,...,x_n}{y_1,...,y_n}{\parity{}}$
where $\parity{} = \parity{1} \X ... \X \parity{n}$.
We call $x_1,...,x_n$ the {\it inner variables} and
$y_1,...,y_n$ the {\it outer variables} of the xor-cycle.
\begin{example}
  The cnf-xor instance shown in Fig.~\ref{Fig:TreeReduction}(a) has one xor-cycle
  $(a \X b \X c \X \T) \land (c \X d \X e) \land (b \X d \X j)$,
  where $b,c,d$ are the inner and
  $a,e,j$ the outer variables.
\end{example}
A key observation is that the $\TerUnRule$ rule can infer a literal
\emph{exactly when there is an xor-cycle}
with the values of the outer variables except for one already derived:
\begin{lemma}\label{Lem:XorImply}
  Assume an \EC-derivation $\ECderivation = \XC_1,\dots,\XC_n$ from
  $\psi = \xorclauses \wedge \AL_1 \wedge ... \wedge \AL_k $,
  where $\xorclauses$ is a 3-xor normal form xor-clause conjunction.
  There is an extension $\ECderivation'$ of $\ECderivation$
  where
  an xor-clause $ (y \equiv \parity{} \oplus \parity{1}' \oplus ... \oplus \parity{n-1}')$
   is derived using \TerUnRuleP{} on the xor-clauses
  $\{(x_1 \XX x_2 \equiv \parity{1} \XX \parity{1}')$, $...$, $(x_{n-1} \XX x_{n} \equiv \parity{n-1} \XX \parity{n-1}')$, $(x_1 \XX x_n \XX y \equiv \parity{n}) \} $
  if and only if
  there is an xor-cycle
  $\xorcycle{x_1,...,x_n}{y_1,...y_{n-1},y}{\parity{}} \subseteq
  \xorclauses$ where $\parity{} {=} \parity{1} \XX ... \XX \parity{n}$ such that for each $y_i \in \set{y_1,...,y_{n-1}}$ it
  holds that $ \psi \ECDeriv (y_i {\equiv} \parity{i}') $.
\end{lemma}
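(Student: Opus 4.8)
The plan is to prove the two directions of the equivalence separately, both hinging on a \emph{type discipline} of the \EC{} rules that is visible in Fig.~\ref{Fig:ECRules}: the rules $\UnConflictRule$ and $\ConflictRule$ conclude $\bot$, the rules $\BinUnRule$ and $\TerUnRule$ conclude a one-variable (unit) xor-clause, and only $\TerBinRule$ concludes a two-variable (binary) xor-clause; no rule ever concludes a clause with three or more variables. Since $\xorclauses$ is in 3-xor normal form, every original xor-clause is ternary and every added assumption $\AL_j$ is a unit, so in any \EC-derivation from $\psi$ each ternary clause must already belong to $\xorclauses$, whereas every binary clause must be the conclusion of some application of $\TerBinRule$. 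This observation is what lets me move freely between cycle clauses of $\xorclauses$ and the binary premises of $\TerUnRule$.

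For the \emph{if} direction I would construct the extension $\ECderivation'$ explicitly. Given the xor-cycle $\xorcycle{x_1,\dots,x_n}{y_1,\dots,y_{n-1},y}{\parity{}} \subseteq \xorclauses$ and, for each $i \in \{1,\dots,n-1\}$, a derivation witnessing $\psi \ECDeriv (y_i \equiv \parity{i}')$, I first append all these sub-derivations to $\ECderivation$; this is legitimate because every \EC-derivation in sight starts from the same axiom set $\psi$. Next, for each $i$ I apply $\TerBinRule$ to the cycle clause $(x_i \oplus x_{i+1} \oplus y_i \equiv \parity{i})$ and the unit $(y_i \equiv \parity{i}')$, obtaining the binary clause $(x_i \oplus x_{i+1} \equiv \parity{i} \oplus \parity{i}')$. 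With all $n-1$ of these binaries and the remaining cycle clause $(x_1 \oplus x_n \oplus y \equiv \parity{n})$ present, a single application of $\TerUnRule$ derives a unit on $y$; summing the right-hand sides telescopes $x_1,\dots,x_n$ away and leaves exactly $(y \equiv \parity{} \oplus \parity{1}' \oplus \dots \oplus \parity{n-1}')$ with $\parity{} = \parity{1} \oplus \dots \oplus \parity{n}$.

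For the \emph{only if} direction I would read the cycle off the assumed $\TerUnRule$ application in $\ECderivation'$ by running this construction backwards. Its ternary premise $(x_1 \oplus x_n \oplus y \equiv \parity{n})$ has three variables, so by the type discipline it lies in $\xorclauses$ and serves as the wrap-around clause of the cycle. Each binary premise $(x_i \oplus x_{i+1} \equiv \parity{i} \oplus \parity{i}')$ occurs earlier in $\ECderivation'$ and, being binary, is the conclusion of some $\TerBinRule$ step; the ternary input of that step is again forced to be an original clause $(x_i \oplus x_{i+1} \oplus y_i \equiv \parity{i}) \in \xorclauses$ on a third variable $y_i$, and its unit input is $(y_i \equiv \parity{i}')$, where the parity equation of $\TerBinRule$ fixes $\parity{i}'$ from the known binary-premise constant and $\parity{i}$. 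The $n-1$ ternary clauses so recovered, together with the wrap-around clause, constitute the required xor-cycle, and because each unit input $(y_i \equiv \parity{i}')$ occurs in $\ECderivation'$ it is \EC-derivable from $\psi$, giving $\psi \ECDeriv (y_i \equiv \parity{i}')$.

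I expect the main obstacle to be the parity bookkeeping rather than the structural argument. The \EC{} rules thread the constant markers $\parity{i}, \parity{i}', \parity{}$ through their $\top$-summands, so one must check that the constant $\TerBinRule$ produces from a cycle clause and its outer unit is precisely $\parity{i} \oplus \parity{i}'$, and that these constants then accumulate to the advertised $\parity{} = \parity{1} \oplus \dots \oplus \parity{n}$ in the final unit; I would discharge this by rewriting every clause in equation form and tracking a single running parity through the telescoping sum, which reduces it to a routine modulo-$2$ calculation. A smaller point, needed to make the recovered cycle well defined in the \emph{only if} direction, is that the chaining pattern $x_i$--$x_{i+1}$ of the binary premises together with the wrap-around pair $x_1$--$x_n$ of the ternary premise already fixes the cyclic incidence structure, so the collected clauses genuinely form an xor-cycle with the third variables $y_1,\dots,y_{n-1}$ (and $y$) as its outer variables.
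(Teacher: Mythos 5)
Your proposal is correct and takes essentially the same route as the paper's proof: the \emph{if} direction constructs the extension exactly as the paper does (append the unit sub-derivations, apply \TerBinRuleP{} to each cycle clause and its outer unit, then fire \TerUnRuleP{}), and the \emph{only if} direction recovers the cycle by tracing each binary premise back to a \TerBinRuleP{} step on a ternary clause of $\xorclauses$ and a unit. Your explicit ``type discipline'' observation (ternaries can only come from $\xorclauses$, binaries only from \TerBinRuleP{}) is just a more carefully justified version of the claim the paper states implicitly, so the two arguments coincide in substance.
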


\subsection{Detecting when equivalence reasoning is enough}

\newcommand{\twoColorInt}{\ensuremath{V_{\textup{in}}}}
\newcommand{\twoColorExt}{\ensuremath{V_{\textup{out}}}}
\newcommand{\twoColoring}{\ensuremath{\twoColorInt,\twoColorExt}}

The presence of xor-cycles in the problem implies that equivalence reasoning
might be useful, but does not give any indication of whether it is enough to
always deduce all xor-implied literals.
Again, we do not know any easy way to detect whether a given xor-clause conjunction is \SUBST{}-deducible (or equivalently, \EC-deducible).
However,
we can obtain a very fast structural test for approximating
\EC-deducibility as shown and analyzed in the following.

We say that a 3-xor normal form xor-clause conjunction $\xorclauses$ is
{\it cycle-partitionable} if
there is a
partitioning $(\twoColorInt{}, \twoColorExt{})$ of $\VarsOf{\xorclauses}$
such that
for each xor-cycle
$\xorcycleS{X}{Y}{\parity{}}$ in $\xorclauses$
it holds that $X \subseteq \twoColorInt{}$ and $Y \subseteq \twoColorExt{}$.
That is,
there should be no variable that appears as an inner variable in one xor-cycle
and
as an outer variable in another. 
For example,
the instance in Fig.~\ref{Fig:TreeReduction}(a) is cycle-partitionable
as $(\{b,c,d\},\{a,e,f,...,m\})$ is a valid cycle-partition.
On the other hand,
the one in Fig.~\ref{Fig:UPCG}(c) is not cycle-partitionable
(although it is \UP-deducible and thus \EC-deducible).
If such cycle-partition can be found,
then equivalence reasoning is enough to always deduce all xor-implied literals.
\begin{theorem}\label{Thm:ColoringTheorem}
  If a 3-xor normal form xor-clause conjunction $\xorclauses$
  is cycle-partitionable,
  then it is \SUBST{}-deducible (and thus also \EC-deducible).
\end{theorem}
Detecting whether a cycle-partitioning exists can be efficiently implemented
with a variant of Tarjan's algorithm for strongly connected components.

To evaluate the accuracy of the technique,
we applied it to the SAT Competition instances discussed 
in Sect.~\ref{Sect:TreeClassification}.
The results are shown in the ``cycle-partitionable'' and ``probably \SUBST'' columns in
Fig.~\ref{Fig:Classification}(a),
where the latter gives the number of instances for which our random testing procedure described in Sect.~\ref{Sect:TreeClassification} was not able to show
that the instance is not \SUBST-deducible.
We see that the accuracy of the cycle-partitioning test is (probably) not
perfect in practice although for some instance families it works very well.

%
%
\subsection{Simulating equivalence reasoning with unit propagation}

The connection between equivalence reasoning and xor-cycles enables us to
consider a potentially more efficient way to implement equivalence reasoning.
We now present three translations that add redundant xor-clauses in the problem
with the aim that unit propagation is enough to always deduce all xor-implied
literals in the resulting xor-clause conjunction.
The first translation is based on the xor-cycles of the formula and
does not add auxiliary variables,
the second translation is based on explicitly communicating equivalences
between the variables of the original formula using auxiliary variables,
and
the third translation combines the first two.

The redundant xor-clause conjunction,
called an \emph{\EC{}-simulation formula} $\psi$,
added to $\xorclauses$ by a translation should satisfy the following:
(i) the satisfying truth assignments of $\xorclauses$ are exactly
the ones of $\xorclauses \land \psi$ when projected to
$\VarsOf{\xorclauses}$,
and
(ii) if $\IL$ is \EC-derivable from $\xorclauses \land (\AL_1) \land ... \land (\AL_k)$, then $\IL$ is \UP-derivable from $(\xorclauses \land \psi) \land (\AL_1) \land ...  \land (\AL_k)$.

%
\subsubsection{Simulation without extra variables.}

We first present an \EC{}-simulation formula for a given
3-xor normal form xor-clause conjunction $\xorclauses$
without introducing additional variables.
The translation adds one xor-clause with the all outer variables
per xor-cycle:
\[
 \xuptrans{\xorclauses} =
\bigwedge_{\xorcycle{x_1,...,x_n}{y_1,...,y_n}{\parity{}} \subseteq \xorclauses}(y_1 \X ...  \X y_n \equiv \parity{})
    \]
For example,
for the xor-clause conjunction $\xorclauses$ in Fig.~\ref{Fig:TreeReduction}(a)
$\xuptrans{\xorclauses} = (a \X e \X j \X \T)$.
Now $\xorclauses \land \xuptrans{\xorclauses} \land (a) \land (\neg j) \UPDeriv e$ although $\xorclauses \land (a) \land (\neg j) \UPDerivNot e$.
\begin{theorem}\label{Thm:XupTransCorrectness}
  If $\xorclauses$ is a 3-xor normal form xor-clause conjunction,
  then $\xuptrans{\xorclauses}$ is an \EC{}-simulation formula for $\xorclauses$.
\end{theorem}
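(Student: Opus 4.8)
The plan is to verify the two defining conditions of an \EC-simulation formula separately. Condition~(i) is the easier one. Since every xor-clause added by $\xuptrans{\xorclauses}$ uses only the outer variables of some xor-cycle, the translation introduces no new variables, so the projection to $\VarsOf{\xorclauses}$ is the identity and it suffices to show that each added clause $(y_1 \X \dots \X y_n \equiv \parity{})$ is a logical consequence of $\xorclauses$. First I would observe that it is exactly the XOR-sum of the $n$ xor-clauses forming the xor-cycle $\xorcycle{x_1,\dots,x_n}{y_1,\dots,y_n}{\parity{}}$: each inner variable $x_i$ occurs in precisely two of the cycle's clauses and hence cancels, each outer variable survives once, and the parities combine to $\parity{} = \parity{1} \X \dots \X \parity{n}$. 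A XOR-sum is satisfied by every assignment satisfying all of its summands, so every model of $\xorclauses$ is a model of $\xorclauses \land \xuptrans{\xorclauses}$ and trivially conversely; the two clause sets have the same models.

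For condition~(ii) I would argue by induction on the length of an \EC-derivation of $\IL$ from $\xorclauses \land (\AL_1) \land \dots \land (\AL_k)$, proving the strengthened claim that \emph{every unit xor-clause (literal), including $\False$, that is \EC-derivable is \UP-derivable from the augmented formula} $(\xorclauses \land \xuptrans{\xorclauses}) \land (\AL_1) \land \dots \land (\AL_k)$. I deliberately restrict the invariant to units, because the binary equivalences produced during equivalence reasoning are in general not \UP-derivable; they are precisely what unit propagation cannot reconstruct on its own. The base case is immediate, since every assumption $\AL_i$ is a unit already present in the formula. For the inductive step I would split on the \EC-rule that first derives a given unit.

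The central case is a unit derived by \TerUnRuleP. Here I would invoke Lemma~\ref{Lem:XorImply}: the derivation of $(y \equiv \parity{} \X \parity{1}' \X \dots \X \parity{n-1}')$ by \TerUnRuleP{} witnesses an xor-cycle $\xorcycle{x_1,\dots,x_n}{y_1,\dots,y_{n-1},y}{\parity{}} \subseteq \xorclauses$ together with earlier \EC-derivations of the outer literals $(y_i \equiv \parity{i}')$ for $i = 1,\dots,n-1$. By the induction hypothesis each $(y_i \equiv \parity{i}')$ is \UP-derivable from the augmented formula. The corresponding added clause $(y_1 \X \dots \X y_{n-1} \X y \equiv \parity{})$ belongs to $\xuptrans{\xorclauses}$, and applying \unitruleP{}/\unitruleN{} to it with the $n-1$ \UP-derived outer literals substitutes away $y_1,\dots,y_{n-1}$ and yields exactly $(y \equiv \parity{} \X \parity{1}' \X \dots \X \parity{n-1}')$. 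The remaining cases reduce to plain unit propagation on original clauses: a unit produced by \BinUnRuleP{} uses a binary equivalence, and since \TerBinRuleP{} is the only \EC-rule that produces binaries (no rule produces ternaries, so every ternary premise is an original clause of $\xorclauses$), that binary arises from a single earlier unit and a single original ternary clause; hence the \BinUnRuleP-derived unit follows by unit propagation on that ternary from two earlier units, which are \UP-derivable by induction. A conflict $\False$ from \UnConflictRuleP{} comes from two contradictory earlier units, which by induction are \UP-derivable and immediately let unit propagation derive the empty clause; and in 3-xor normal form \ConflictRuleP{} is a shorthand for two \TerUnRuleP{} steps and one \UnConflictRuleP{} step, so it reduces to the cases already treated.

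The main obstacle I anticipate is the bookkeeping around ordering and the induction measure: I must ensure that the outer literals $(y_i \equiv \parity{i}')$ supplied by Lemma~\ref{Lem:XorImply} are genuinely derived strictly before the \TerUnRuleP-conclusion so that the induction hypothesis applies, and that the binary premises of \BinUnRuleP{} and \TerUnRuleP{} are correctly traced back to their generating \TerBinRuleP{} steps and the original ternary clauses. Formalizing cleanly the two structural invariants of \EC-derivations in 3-xor normal form — that every ternary appearing in a derivation is an original clause of $\xorclauses$, and that every binary is generated by \TerBinRuleP{} from one original ternary and one earlier unit — is the part that requires the most care; once these are in place, the rule-by-rule simulation is routine.
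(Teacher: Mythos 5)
Your proof is correct and takes essentially the same route as the paper's: condition~(i) by observing that each added clause is the linear combination of its xor-cycle's clauses, and condition~(ii) by induction over the \EC{}-derivation, with Lemma~\ref{Lem:XorImply} plus the corresponding clause in $\xuptrans{\xorclauses}$ handling \TerUnRuleP{} and the remaining rules simulated directly by \unitruleP{}/\unitruleN{}. One aside in your plan is factually wrong, though harmless to the argument: the binaries appearing in an \EC{}-derivation from a 3-xor normal form formula \emph{are} \UP{}-derivable --- \TerBinRuleP{} is literally an instance of the \UP{} rules applied to an original ternary and an earlier unit, which is exactly how you yourself trace them, and the paper's own induction invariant accordingly covers all derived clauses, binaries included --- what unit propagation cannot do is \emph{combine} two binaries with no unit premise available, and that is precisely the gap the added cycle clauses close.
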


The translation is intuitively suitable for problems that have a small number
of xor-cycles, such as the DES cipher. Each instance of our DES benchmark (4
rounds, 2 blocks) has 28--32 xor-cycles.
We evaluated experimentally the translation on this benchmark using cryptominisat 2.9.2, minisat 2.0, minisat 2.2, and minisat 2.0 extended with the \UP{} xor-reasoning module.
The benchmark set has 51 instances and the clauses of each instance are permuted 21 times randomly to negate the effect of propagation order.
The results are shown in
Fig.~\ref{Fig:Simulation}(a).
The translation manages to slightly reduce solving time for cryptominisat, but this does not happen for other solver configurations based on minisat, so the slightly improved performance is not completely due to simulation of equivalence reasoning using unit propagation.
The xor-part (320 xor-clauses of which 192 tree-like) in DES is negligible compared to cnf-part (over 28000 clauses), so a great
reduction in solving time is not expected.

\begin{figure}[t]
  \centering
  \begin{tabular}{@{}c@{\quad}c@{}}
    \includegraphics[width=.49\textwidth]{GeneratedPlots/cryptominisat_des_cycles_time_cactus}
    &
    \includegraphics[width=.47\textwidth]{Figures/diamond}
    \\
    (a) & (b)%
  \end{tabular}%
  \vspace{-2mm}
  \caption{The $\xuptrans{\xorclauses}$ translation on DES instances (a), and the constraint graph of $\diamondcycles$ (b).}
  \label{Fig:Simulation}
\end{figure}

Although equivalence reasoning can be simulated with unit propagation by
adding an xor-clause for each xor-cycle,
this is not feasible for all instances in practice due to the large number of xor-cycles.
We now prove that,
\emph{without} using auxiliary variables,
there are in fact families of xor-clause conjunctions for which
all \EC{}-simulation formulas, whether based on xor-cycles or not,
are exponential.
%
%
Consider the xor-clause conjunction
$\diamondcycles =
 (x_1 \X x_{n+1} \X y) \land
 \bigwedge_{i=1}^n (x_i \XX x_{i,a} \XX x_{i,b}) 
   \land (x_{i,b} \XX x_{i,c} \XX x_{i+1})
   \land (x_i \XX x_{i,d} \XX x_{i,e})
   \land (x_{i,e} \XX x_{i,f} \XX x_{i+1})$
whose constraint graph is shown in
Fig.~\ref{Fig:Simulation}(b).
Observe that $\diamondcycles$ is cycle-partitionable and
thus \SUBST/\EC-deducible.
But all its \EC-simulation formulas are at least of exponential size if no auxiliary variables are allowed:%
%
\begin{lemma}\label{Lem:DiamondCycles}
  Any \EC{}-simulation formula $\psi$ for $\diamondcycles$
  with $\VarsOf{\psi} = \VarsOf{\diamondcycles}$
  contains at least $2^n$ xor-clauses.
\end{lemma}


\newcommand{\eijtrans}[1]{\ensuremath{\textup{Eq}}(#1)}
\newcommand{\opttransname}{\ensuremath{\textup{Eq}^\star}}
\newcommand{\eijtransname}{\ensuremath{\textup{Eq}}}
\newcommand{\opttrans}[1]{\ensuremath{\textup{Eq}^\star}(#1)}

\subsubsection{Simulation with extra variables: basic version.}

Our second translation $\eijtrans{\xorclauses}$ avoids the exponential increase
in size by introducing a quadratic number of auxiliary variables.
A new variable $e_{ij}$ is added for each pair of distinct variables
$x_i,x_j \in \VarsOf{\xorclauses}$,
with the intended meaning that $e_{ij}$ is true when $x_i$ and $x_j$ have the same value and false otherwise. 
We identify $e_{ji}$ with $e_{ij}$.
Now the translation is
\begin{eqnarray*}
  \eijtrans{\xorclauses} &=&
  \bigwedge_{(x_i \XX x_j \XX x_k {\equiv} \parity{}) \in \xorclauses}
  (e_{ij} \XX x_k \XX \T {\equiv} \parity{}) {\land}
  (e_{ik} \XX x_j \XX \T {\equiv} \parity{}) {\land}
  (x_i \XX e_{jk} \XX \T {\equiv} \parity{}) {\land}\\
  &&
  \bigwedge_{x_i,x_j,x_k \in \VarsOf{\xorclauses},i < j < k}(e_{ij} \X e_{jk} \X e_{ik} \equiv \top)
\end{eqnarray*}
where
(i) the first line ensures that if we can deduce that two variables in a ternary xor-clause are (in)equivalent, then we can deduce the value of the third variable,
and vice versa,
and
(ii) the second line encodes transitivity of (in)equivalences.
The translation enables unit propagation to deduce all \EC{}-derivable
literals over the variables in the original xor-clause conjunction:
%
%
\begin{theorem}
  \label{Thm:EijTransCorrectness}
  If $\xorclauses$ is an xor-clause conjunction in 3-xor normal form,
  then $\eijtrans{\xorclauses}$ is an \EC{}-simulation formula for $\xorclauses$.
  \label{Thm:Eij}
\end{theorem}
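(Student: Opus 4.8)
The plan is to establish the two \EC{}-simulation conditions for $\eijtrans{\xorclauses}$: first that it preserves the satisfying assignments projected onto $\VarsOf{\xorclauses}$, and second that every \EC-derivable literal becomes \UP-derivable once the translation is added. The satisfiability-preservation part I expect to be routine: each auxiliary variable $e_{ij}$ is fully \emph{defined} by the intended meaning $e_{ij} \Equiv (x_i \Equiv x_j)$, so one checks that the ternary clauses $(e_{ij} \X x_k \X \T \equiv \parity{})$ are logical consequences of $(x_i \X x_j \X x_k \equiv \parity{})$ together with this definition, and that the transitivity clauses $(e_{ij} \X e_{jk} \X e_{ik} \equiv \top)$ are tautological under it. Hence any model of $\xorclauses$ extends uniquely to a model of $\xorclauses \land \eijtrans{\xorclauses}$, and every model of the latter restricts to one of $\xorclauses$; this gives condition (i).

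The substance is condition (ii), the simulation of equivalence reasoning. My intended route is to leverage Lemma~\ref{Lem:XorImply}, which tells us that every literal \EC-derives through $\TerUnRule$ precisely by means of an xor-cycle in which all outer variables but one have had their values established. I would argue by induction on the length of the \EC-derivation $\ECderivation$. For a literal $\IL$ derived at step $m$, Lemma~\ref{Lem:XorImply} supplies an xor-cycle $\xorcycle{x_1,\dots,x_n}{y_1,\dots,y_{n-1},y}{\parity{}}$ together with earlier-derived equivalences $(y_i \equiv \parity{i}')$. The key invariant I want to carry is that whenever \EC{} has established an equivalence (or inequivalence) between two variables $x_i, x_j$ of $\xorclauses$, unit propagation on $\xorclauses \land \eijtrans{\xorclauses}$ has fixed the corresponding auxiliary literal $e_{ij}$ (to $\top$ or $\bot$ respectively). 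Given this invariant, the inner chain $x_1 \X x_2 \equiv \cdots$ of the cycle corresponds to a chain of $e$-variables whose values unit propagation can compose through the transitivity clauses $(e_{ij} \X e_{jk} \X e_{ik} \equiv \top)$, collapsing the whole cycle down to a single equivalence $e$-literal relating $x_1$ and $x_n$; the first-line clause then lets unit propagation fire on $y$ and recover $\IL$. Re-establishing the invariant for the newly derived literal closes the induction.

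The main obstacle is bookkeeping the correspondence between \EC's binary xor-clauses over original variables and the auxiliary $e$-literals, and in particular handling the parity bookkeeping $\parity{i}'$ correctly so that ``equivalent'' maps to $e_{ij} \Equiv \T$ and ``inequivalent'' to $e_{ij} \Equiv \F$ throughout. I would set this up carefully with a single lemma stating the invariant ``$\xorclauses \land \AL_1 \land \cdots \land \AL_k \ECDeriv (x_i \X x_j \equiv \parity{})$ iff $\xorclauses \land \eijtrans{\xorclauses} \land \AL_1 \land \cdots \land \AL_k \UPDeriv (e_{ij} \equiv \parity{})$,'' which I expect to be the real workhorse. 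A subtle point to verify is that the transitivity chain can be assembled by unit propagation in some order without needing to know the values of the \emph{inner} variables $x_1,\dots,x_n$ themselves — only the $e$-relations among them — which is exactly what the second line of $\eijtrans{\xorclauses}$ provides; confirming that no ordering of propagations gets stuck is where I would spend the most care. The converse direction, that unit propagation on the translation derives nothing over $\VarsOf{\xorclauses}$ that \EC{} could not, follows from condition (i) since \UP{} is sound, so it requires no separate argument.
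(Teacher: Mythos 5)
Your proposal follows essentially the same route as the paper's own proof: condition (i) is established by extending each model of $\xorclauses$ with the intended semantics of the $e_{ij}$ variables and checking the added clauses, and condition (ii) by induction on the \EC-derivation, using Lemma~\ref{Lem:XorImply} to reduce each \TerUnRule{} step to an xor-cycle whose outer-variable values transfer to $e$-literals through the first-line clauses and then compose through the transitivity clauses until the last clause fires on $y$ --- exactly the paper's mechanism, with your ``workhorse invariant'' making explicit the correspondence (\EC-derived binary clause over $x_i,x_j$) $\leftrightarrow$ (\UP-derived literal on $e_{ij}$) that the paper uses implicitly. The only difference is cosmetic: the paper dispatches the remaining rules (\UnConflictRuleP{}, \BinUnRuleP{}, \TerBinRuleP{}, and \ConflictRuleP{}) in one sentence as unit-propagation special cases, a step your write-up leaves implicit.
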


\subsubsection{Simulation with extra variables: optimized version.}

The translation $\eijtrans{\xorclauses}$ adds a cubic number of clauses
with respect to the variables in $\xorclauses$.
This is infeasible for many real-world instances.
The third translation combines the first two translations by
implicitly taking into account the xor-cycles
in $\xorclauses$
while adding auxiliary variables where needed.
The translation $\opttrans{\xorclauses}$ is presented in Fig.~\ref{Fig:OptEij}.
The xor-clauses added by $\opttrans{\xorclauses}$ are a subset of
$\eijtrans{\xorclauses}$ and
the meaning of the variable $e_{ij}$ remains the same.
The intuition behind the translation, on the level of constraint graphs,
is to iteratively shorten xor-cycles by ``eliminating'' one variable
at a time by adding auxiliary variables that
``bridge'' possible equivalences over the eliminated variable.
The line 2 in the pseudo-code picks a variable $x_j$ to eliminate.
While the correctness of the translation does not depend on the choice,
it is sensible to pick a variable that shares xor-clauses with
fewest variables because
the number of xor-clauses produced in lines 3--9 is then smallest.
The loop in line 3 iterates over all possible xor-cycles where the selected
variable $x_j$ and two ``neighboring'' non-eliminated variables
$x_i$,$x_k$ may occur as inner variables.
The line 4 checks if there already is an xor-clause that has both $x_i$ and $x_k$.
If so, then in line 5 an existing variable is used as $e_{ik}$
capturing the equivalence between the variables $x_i$ and $x_k$.
If the variable $\parity{ik}$ is $\top$, then $e_{ik}$ is true when the variables $x_i$ and $x_k$ have the same value. 
The line 9 adds an xor-clause ensuring that transitivity of equivalences
between the variables $x_i$, $x_j$,and $x_k$
can be handled by unit propagation.

\newcommand{\vleft}{\ensuremath{V}}
\begin{figure}[bt]
{\small
\begin{tabbing}
99.x\={mm}\={mm}\={mm}\=\kill
$\opttrans{\xorclauses}$:\; start with $ \xorclauses' = \xorclauses $ and $V = \VarsOf{\xorclauses}$ \\
1.\>\While{} ($V \not = \emptyset$):\\
2.\>\>$x_j$ $\leftarrow$ extract a variable $v$ from $V$ minimizing $|\VarsOf{\Setdef{C\in \xorclauses'}{v \in \VarsOf{C}}} \cap V|$\\
3.\>\>\For{} each 
$(x_i \XX x_j \XX e_{ij} {\equiv} \parity{ij}){,}(x_j \XX x_k \XX e_{jk} {\equiv}\parity{jk}) {\in} \xorclauses' \mbox{ such that}\,x_i{,}x_k {\in} V {\wedge} x_i {\not =} x_j {\not =} x_k$\\
4.\>\>\>\If{} $ (x_i \oplus x_k \oplus y \equiv \parity{ik}') \in \xorclauses'$ \\
5.\>\>\>\>$e_{ik} \leftarrow y$; $\parity{ik} \leftarrow \parity{ik}'$ \\
6.\>\>\>\Else{}\\
7.\>\>\>\>$e_{ik} \leftarrow \mbox{new variable}$; $\parity{ik} \leftarrow \top $ \\
8.\>\>\>\>$\xorclauses' \leftarrow \xorclauses' \wedge (x_i \oplus x_k \oplus e_{ik} \equiv \parity{ik})$\\
9.\>\>\>$\xorclauses' \leftarrow \xorclauses' \wedge (e_{ij} \oplus e_{jk} \oplus e_{ik} \equiv \parity{ij} \oplus \parity{jk} \oplus \parity{ik})$ \\
10.\> \Return{} $\xorclauses' \backslash \xorclauses$%
\end{tabbing}%
}%
\vspace{-2mm}
\caption{The $\opttransname$ translation}
\label{Fig:OptEij}
\end{figure}%

\begin{example}
  Consider the xor-clause conjunction
  $\xorclauses = (x_1 \XX x_2 \XX x_4) \land (x_2 \XX x_3 \XX x_5) \land
   (x_5 \XX x_7 \XX x_8) \land (x_4 \XX x_6 \XX x_7)$
  shown in Fig.~\ref{Fig:OptTransExample}(a).
  The translation $\opttrans{\xorclauses}$ first selects one-by-one
  the variables in $\set{x_1,x_3,x_6,x_8}$ as each appears in
  only one xor-clause.
  The loop in lines 3--9 is not executed for any of them.
  The remaining variables are $V {=} \set{x_2,x_4,x_5,x_7}$.
  Assume that $x_2$ is selected.
  The loop in lines 3--9 is entered with values $x_i{=}x_4$, $x_j {=} x_2$, 
  $e_{ij} {=} x_1$, $x_k {=} x_5$, $e_{jk} {=} x_3$,
  $\parity{ij} {=} \top$, and $\parity{jk} {=} \top$.
  The condition in line 4 fails, so
  the xor-clauses $ (x_4 \XX x_5 \XX e_{45} {\equiv} \top) $ and $ (x_1 \XX x_3 \XX e_{45} {\equiv} \top)$,where $e_{45}$ is a new variable, are added.
  The resulting instance is shown in Fig.~\ref{Fig:OptTransExample}(b).
  Assume that $x_5$ is selected.
  The loop in lines 3--9 is entered with values $ x_i {=} x_4$, $x_j {=} x_5$, 
  $e_{ij} {=} e_{45}$, $x_k {=} x_7$, $e_{jk} {=} x_8$,
  $\parity{ij} {=} \top$, and $\parity{jk} {=} \top$.
  The condition in line 4 is true, so $e_{ik}{=}x_6 $, and the xor-clause $ (x_6 \XX x_8 \XX e_{45} {\equiv} \top) $ is added in line 9.
  The final result is shown in Fig.~\ref{Fig:OptTransExample}(c).
\end{example}

\begin{figure}[t]
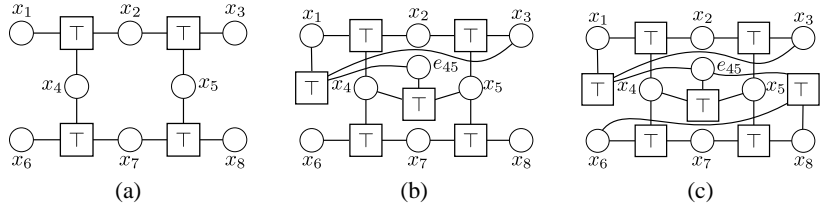

  \centering
  \begin{tabular}{c@{\qquad}c@{\qquad}c}
    \includegraphics[width=0.26\textwidth]{Figures/xupification1} &
    \includegraphics[width=0.26\textwidth]{Figures/xupification2} &
    \includegraphics[width=0.26\textwidth]{Figures/xupification3}
    \\
    (a) & (b) & (c)
  \end{tabular}%
  \vspace{-2mm}
  \caption{Constraint graphs illustrating how the translation $\opttransname$ adds new xor-clauses}
  \label{Fig:OptTransExample}
\end{figure}

\begin{theorem}\label{Thm:OptTransCorrectness}
  If $\xorclauses$ is an xor-clause conjunction in 3-xor normal form,
  then $\opttrans{\xorclauses}$ is an \EC{}-simulation formula for $\xorclauses$.
\end{theorem}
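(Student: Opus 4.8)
The plan is to reduce the correctness of $\opttrans{\xorclauses}$ to that of the already-established basic translation $\eijtrans{\xorclauses}$ (Theorem~\ref{Thm:EijTransCorrectness}). The key structural observation, already noted in the text, is that every xor-clause added by $\opttrans{\xorclauses}$ is a member of $\eijtrans{\xorclauses}$ under the intended reading of the auxiliary variables $e_{ij}$ (namely, $e_{ij}\equiv\top$ encodes $x_i\equiv x_j$). Thus $\opttrans{\xorclauses}$ is a \emph{selected subset} of $\eijtrans{\xorclauses}$, augmented with genuinely new auxiliary variables created in line~7 when no existing bridging variable is available. I would first verify condition~(i) of the \EC-simulation definition: the added clauses are all logical consequences of $\xorclauses$ together with the definitions of the $e_{ij}$, so the satisfying assignments of $\xorclauses \land \opttrans{\xorclauses}$ projected to $\VarsOf{\xorclauses}$ coincide with those of $\xorclauses$. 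This is the routine part and follows because each added clause of the form $(e_{ij}\X e_{jk}\X e_{ik}\equiv\cdots)$ or $(x_i\X x_k\X e_{ik}\equiv\parity{ik})$ is a linear combination of original clauses and $e$-definitions.

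The substantial part is condition~(ii): whenever a literal $\IL$ over $\VarsOf{\xorclauses}$ is \EC-derivable from $\xorclauses\land(\AL_1)\land\dots\land(\AL_k)$, it must be \UP-derivable from $(\xorclauses\land\opttrans{\xorclauses})\land(\AL_1)\land\dots\land(\AL_k)$. By Lemma~\ref{Lem:XorImply}, every \EC-derived literal ultimately arises from applying \TerUnRuleP{} along an xor-cycle with all but one outer value already derived. So it suffices to show that unit propagation, using the clauses of $\opttrans{\xorclauses}$, can propagate the equivalence information around any such xor-cycle. The core of the argument is the loop invariant of the \textbf{while} loop in Fig.~\ref{Fig:OptEij}: I would prove by induction on the iterations that, after eliminating a variable $x_j$ via an equivalence $e_{ij}\equiv e_{jk}$ between its two neighbors, the graph $\xorclauses'$ retains a ``shortcut'' clause (added in lines~8--9) capturing the same equivalence across the gap left by $x_j$. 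Concretely, the transitivity clause $(e_{ij}\X e_{jk}\X e_{ik}\equiv \parity{ij}\X\parity{jk}\X\parity{ik})$ added on line~9 lets unit propagation deduce $e_{ik}$ from $e_{ij}$ and $e_{jk}$, which is exactly the chaining of equivalences that \EC{} performs over $x_j$. Hence any xor-cycle of length $\ell$ in $\xorclauses$ is progressively reduced to a shorter one in $\xorclauses'$ until a single bridging clause directly witnesses the \TerUnRuleP{} inference, and that clause enables unit propagation to conclude $\IL$.

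The main obstacle I anticipate is the bookkeeping for the two branches of line~4: when a bridging variable $e_{ik}$ already exists as an original variable $y$ (line~5), one must check that its polarity $\parity{ik}'$ is correctly incorporated so that the transitivity clause added on line~9 still encodes the intended equivalence relation, rather than its negation. Getting the parity arithmetic right across an arbitrary-length cycle—where each elimination step may flip the polarity convention—is where the proof could go wrong, and I would handle it by carrying the parity $\parity{ik}$ explicitly as part of the invariant (``$e_{ik}\equiv\top$ iff $x_i$ and $x_k$ have values making the accumulated cycle-parity correct''). Once the invariant is phrased to track both the equivalence \emph{and} its parity label, the inductive step becomes a direct verification that each newly added clause is precisely the \UP-consequence needed to continue the chain, and condition~(ii) follows. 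The fact that $\opttrans{\xorclauses}\subseteq\eijtrans{\xorclauses}$ (up to renaming of reused variables) then guarantees that nothing spurious is added, so soundness from part~(i) is preserved throughout.
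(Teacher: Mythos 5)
Your proposal matches the paper's own proof in both structure and substance: condition~(i) is discharged exactly as you suggest, via the observation that $\opttrans{\xorclauses} \subseteq \eijtrans{\xorclauses}$ together with Theorem~\ref{Thm:EijTransCorrectness}, and condition~(ii) is proved by reducing \EC{} inferences to \TerUnRuleP{} applications on xor-cycles (Lemma~\ref{Lem:XorImply}) and then showing that unit propagation can chain around any such cycle using the transitivity clauses added on line~9. Your loop-invariant, cycle-shortening induction, including the parity bookkeeping for reused bridging variables from line~5, is precisely the content of the paper's Lemmas~\ref{Lem:CycleCompress} and~\ref{Lem:OptEijUP}.
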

The translation $\opttransname{}$ usually adds fewer variables and
xor-clauses than $\eijtransname{}$.
Fig.~\ref{Fig:TranslationSizes} shows a comparison of
the translation sizes on four cipher benchmarks.
The translation $\eijtransname{}$ yields an impractically large
increase in formula size,
while the translation $\opttransname{}$ adds still a
manageable
number of new variables and xor-clauses.


\begin{figure}[t]
\centering
  \begin{scriptsize}
\begin{tabular}{|l|c|c|c|c|c|c|c|}
\hline
\textbf{Benchmark}
&
\multicolumn{2}{|c|}{
    $\phi = \mbox{Original}$
}
&
\multicolumn{2}{|c|}{
   $\phi \wedge \eijtrans{\phi}$ 
}
&
\multicolumn{2}{|c|}{
    $\phi \wedge \opttrans{\phi}$ 
}
&
\\
&
\multicolumn{1}{|c}{
  vars
}
&
\multicolumn{1}{c|}{
  xor-clauses
}
&
\multicolumn{1}{|c}{
  vars
}
&
\multicolumn{1}{c|}{
  xor-clauses
}
&
\multicolumn{1}{|c}{
  vars
}
&
\multicolumn{1}{c|}{
  xor-clauses
}
& $\frac{|\phi \wedge \eijtrans{\phi}|}{
         |\phi \wedge \opttrans{\phi}|}$
\\
\hline
DES (4 rounds 2 blocks) 
& 3781
& 320
& $7 \times 10^6$
& $9 \times 10^9$
& 3813
& 416
& $ 2.2 \times 10^7$
\\

Grain (16 bit)
& 9240
& 6611
& $43 \times 10^6$
& $131 \times 10^9$
& 71670 
& 3957571
& 33212.0
\\

Hitag2 (33 bit) 
& 6010
& 3747
& $18 \times 10^6$
& $36 \times 10^9$
& 21092
& 338267
& 106904.4
\\

\Trivium{} (16 bit)
& 11485
& 8591
& $66 \times 10^6$
& $252 \times 10^9$
& 351392 
& 30588957 
& 8252.1
\\
\hline
\end{tabular}%
  \end{scriptsize}%
\caption{Comparison of the translation sizes for $\eijtransname{}$
and $\opttransname{}$ on cipher benchmarks}
\label{Fig:TranslationSizes}
\end{figure}

\subsubsection{Experimental evaluation.}
To evaluate the translation $\opttransname{}$,
we ran cryptominisat 2.9.2, and glucose 2.0 (SAT Competition 2011 application track winner)
on the 123 SAT 2005 Competition cnf-xor instances preprocessed into 3-xor normal form
with and without $\opttransname{}$.
The results are shown in
Fig.~\ref{Fig:OptTransResults}. The number of
decisions is greatly reduced, and this is reflected in solving time on many
instances. Time spent computing $\opttransname{}$ is measured in seconds and is negligible compared to solving time.
On some instances, the translation adds a very
large number of xor-clauses (as shown in Fig.~\ref{Fig:NonSubstExamples}a) and the computational overhead of simulating
equivalence reasoning using unit propagation becomes prohibitively large.
For highly ``xor-intensive''
instances it is probably better to use more powerful parity reasoning; cryptominisat 2.9.2
with Gaussian elimination enabled solves majority of these instances in a few
seconds. A hybrid approach first trying if $\opttransname{}$ adds a
moderate number of xor-clauses, and if not, resorting to stronger 
parity reasoning could, thus, be an effective technique for solving cnf-xor
instances.

\begin{figure}[t]
  \centering
  \includegraphics[width=\textwidth]{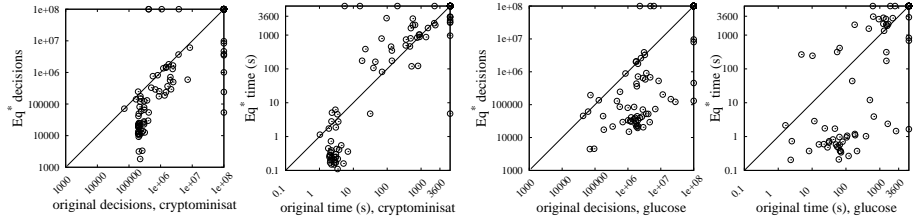}
  \vspace{-6mm}
  \caption{Experimental results with/without $\opttransname{}$ (cryptominisat on the left, glucose on the right)}
  \label{Fig:OptTransResults}
\end{figure}

\subsubsection{Strengthening equivalence reasoning by adding xor-clauses.}
Besides enabling unit propagation to simulate equivalence reasoning, the
translation $\opttrans{\xorclauses}$ has an another 
interesting property:
if $\xorclauses$ is not \SUBST{}-deducible, then 
even when $\xorclauses \wedge \AL_1 \wedge ... \wedge \AL_n \SUBSTDerivNot \IL $ for some xor-assumptions $\AL_1,...,\AL_n$, it might hold that
$ \xorclauses \wedge \opttrans{\xorclauses} \wedge \AL_1 \wedge ... \wedge \AL_n \SUBSTDeriv \IL $.
For instance,
let $\xorclauses$ be an xor-clause conjunction given in Fig.~\ref{Fig:NonSubstExamples}(b).
It holds that $ \xorclauses \wedge (x) \Models (z) $ but
$\xorclauses \wedge (x) \SUBSTDerivNot (z)$.
However, $\xorclauses \land \opttrans{\xorclauses} \wedge (x) \SUBSTDeriv (z)$;
the constraint graph of $\xorclauses \land \opttrans{\xorclauses}$ is
shown in Fig.~\ref{Fig:NonSubstExamples}(c).

\begin{figure}[t]


  \centering
  \begin{tabular}{c@{\quad}c@{\quad}c}
  \includegraphics[width=0.30\textwidth]{GeneratedPlots/sat05-eij-num-xors} &
    \includegraphics[width=0.30\textwidth]{Figures/nonsubst} &
  \includegraphics[width=0.30\textwidth]{Figures/simplextosubst2}
    \\
    (a) & (b) & (c)
  \end{tabular}%
  \caption{(a) number of xor-clauses in our SAT'05 instances, (b) a non-\SUBST{}-deducible instance that becomes \SUBST{}-deducible with \opttransname in (c)}
  \label{Fig:NonSubstExamples}
\end{figure}

\section{Conclusions}

We have given efficient approximating tests for detecting whether unit
propagation or equivalence reasoning is enough to achieve full propagation in a
given parity constraint set.
To our knowledge the computational complexity of exact versions of these
tests is an open problem; they are certainly in co-NP but are they in P?

We have shown that equivalence reasoning can be simulated with unit propagation by adding a polynomial amount of redundant parity constraints to the problem.
We have also proven that without introducing new variables,
an exponential number of new parity constraints would be needed
in the worst case.
We have found many real-world problems for which unit propagation or
equivalence reasoning achieves full propagation.
The experimental evaluation of
the presented translations suggests that equivalence reasoning can be
efficiently simulated by unit propagation.

\subsubsection*{Acknowledgments.}
This work has been financially supported by the Academy  of Finland under the Finnish Centre of Excellence in Computational Inference (COIN)
    and Hecse Helsinki Doctoral Programme in Computer Science.

\bibliographystyle{splncs}
\bibliography{paper}

\appendix

\newpage

\newcommand{\XCB}{E}
\newcommand{\Equal}{\equiv}

\section{Proofs}
For two xor-clauses
$\XC = (x_1 \X ... \X x_k \Equal p)$
and
$\XCB = (y_1 \X ... \X y_l \Equal q)$,
we define their linear combination xor-clause by
$\XC \X \XCB =
 (x_1 \X ... \X x_k \X y_1 \X ... \X y_l \Equal p \X q)$.
Some fundamental, easy to verify properties of xor-clauses are
${\XC \land \XCB} \Models {\XC \X \XCB}$,
${\XC \land \XCB} \Models {\XC \land (\XC \X \XCB)}$,
and
${\XC \land (\XC \X \XCB)} \Models {\XC \land \XCB}$.


\newenvironment{relemma}[1]{\renewcommand{\thelemma}{#1}\begin{lemma}}{\end{lemma}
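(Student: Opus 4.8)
The plan is to verify the two defining conditions of an \EC-simulation formula for $\psi = \opttrans{\xorclauses}$ separately: the model-preservation condition~(i) and the simulation condition~(ii). For condition~(i), note that $\opttransname{}$ only ever adds clauses of two kinds (see Fig.~\ref{Fig:OptEij}): a \emph{bridge clause} $(x_i \oplus x_k \oplus e_{ik} \equiv \parity{ik})$ on line~8, where $e_{ik}$ is a \emph{fresh} variable, and a \emph{transitivity clause} $(e_{ij} \oplus e_{jk} \oplus e_{ik} \equiv \parity{ij} \oplus \parity{jk} \oplus \parity{ik})$ on line~9. A bridge clause simply defines its fresh variable by $e_{ik} = x_i \oplus x_k \oplus \parity{ik}$, so it extends every truth assignment over the already-present variables in exactly one way and hence does not change the set of models once projected away. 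In a transitivity clause each of $e_{ij},e_{jk},e_{ik}$ has already been fixed by a bridge or original clause to equal $x_a \oplus x_b \oplus \parity{ab}$ for the appropriate pair; substituting these three definitions collapses the left-hand side to $\parity{ij} \oplus \parity{jk} \oplus \parity{ik}$, so the transitivity clause is a logical consequence of clauses already present and is therefore redundant. Consequently every model of $\xorclauses$ extends uniquely to a model of $\xorclauses \land \opttrans{\xorclauses}$ and conversely, which is condition~(i).

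For condition~(ii) I would argue by induction over the \EC-derivation of $\IL$ from $\xorclauses \land \AL_1 \land \dots \land \AL_k$, maintaining the invariant that every literal \EC-derived so far is also UP-derivable from $(\xorclauses \land \opttrans{\xorclauses}) \land \AL_1 \land \dots \land \AL_k$. Because $\xorclauses$ is in 3-xor normal form, the rules $\UnConflictRule$, $\BinUnRule$, and $\TerBinRule$ are nothing but unit substitutions into the original unary/binary/ternary clauses, and $\ConflictRule$ is a shorthand for two applications of $\TerUnRule$ followed by $\UnConflictRule$; all of these are reproduced directly by the \UP{} rules on clauses that are still present in the augmented formula. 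Thus the only rule requiring genuine work is $\TerUnRule$. By Lemma~\ref{Lem:XorImply}, an application of $\TerUnRule$ deriving the value of an outer variable corresponds to an xor-cycle $\xorcycle{x_1,\dots,x_n}{y_1,\dots,y_n}{\parity{}} \subseteq \xorclauses$ all of whose outer values but one, say $y_n$, have been \EC-derived; by the induction hypothesis the values $y_1,\dots,y_{n-1}$ are already UP-derived on the augmented formula, and it remains to show that unit propagation derives $y_n$.

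This last claim I would prove by an inner induction on the cycle length $n$, following the order in which $\opttransname{}$ eliminates variables (line~2). In the base case $n=3$ the two far neighbours of the first-eliminated inner variable already share the closing cycle clause, so the test on line~4 succeeds and line~9 adds precisely the all-outer clause $(y_1 \oplus y_2 \oplus y_3 \equiv \parity{})$; unit propagation on this clause then derives $y_3$, exactly as in the proof of Theorem~\ref{Thm:XupTransCorrectness}. For $n \ge 4$, let $x_j$ be the first inner variable of the cycle that $\opttransname{}$ eliminates; since $\opttransname{}$ never deletes clauses and no earlier step removed $x_j$, the two incident cycle clauses $(x_{j-1} \oplus x_j \oplus y_{j-1})$ and $(x_j \oplus x_{j+1} \oplus y_j)$ are present with $x_{j-1},x_{j+1} \in V$ at that moment, so the loop on line~3 bridges them, introducing a variable $e$ with transitivity clause $(y_{j-1} \oplus y_j \oplus e \equiv \cdots)$. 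As $y_{j-1}$ and $y_j$ are already UP-derived, unit propagation derives the value of $e$. Replacing the two clauses at $x_j$ by the bridge clause yields a length-$(n-1)$ xor-cycle in the augmented formula whose outer variables are those of the original cycle with $y_{j-1},y_j$ replaced by $e$; its missing outer value is still $y_n$, all other outer values are UP-derived, and its inner variables are again original variables that $\opttransname{}$ bridges in later iterations, so the inner induction hypothesis applies and unit propagation derives $y_n$.

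I expect the main obstacle to be the bookkeeping of this inner induction rather than any single deep idea. One has to verify that at the moment each successive inner variable is eliminated the corresponding cycle clauses are indeed present and its two neighbours still lie in $V$ (which follows from processing the inner variables in elimination order and from the fact that line~2 only removes variables, never clauses), and one has to treat uniformly the two branches of the line~4 test --- when a fresh bridge variable is created versus when an existing variable is reused as $e_{ik}$ --- checking in both cases that the transitivity clause carries the correct equivalence value by unit propagation. Tracking the parities $\parity{ij} \oplus \parity{jk} \oplus \parity{ik}$ through the successive shortenings is routine but must be done consistently, so that the final all-outer clause reproduces the parity $\parity{}$ predicted by Lemma~\ref{Lem:XorImply}.
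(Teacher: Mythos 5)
Your overall architecture is the same as the paper's: condition~(ii) by structural induction on the \EC-derivation, reducing everything to \TerUnRule{} via Lemma~\ref{Lem:XorImply}, and then an inner induction on cycle length that follows the algorithm's elimination order --- this inner induction is exactly the content of the paper's two auxiliary lemmas (Lemma~\ref{Lem:CycleCompress} and Lemma~\ref{Lem:OptEijUP}). Your argument for condition~(i) is a legitimate alternative to the paper's (the paper invokes $\opttrans{\xorclauses} \subseteq \eijtrans{\xorclauses}$ and Theorem~\ref{Thm:EijTransCorrectness}, while you argue directly that bridge clauses are definitions of fresh variables and transitivity clauses are implied); that part is fine, and arguably more self-contained. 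The problem is in the inner induction step.

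You write: ``As $y_{j-1}$ and $y_j$ are already UP-derived, unit propagation derives the value of $e$,'' and then keep $y_n$ as the missing outer value of the shortened cycle. This step fails whenever the first-eliminated inner variable $x_j$ is incident to the closing clause $(x_1 \oplus x_n \oplus y_n)$, i.e.\ $x_j \in \set{x_1, x_n}$: then one of the two outer variables adjacent to $x_j$ \emph{is} $y_n$ itself, which is precisely the value that has not been derived, so the value of $e$ cannot be unit-propagated at this point. Nothing rules this case out --- line~2 of $\opttransname$ chooses the elimination order by a degree heuristic that knows nothing about which outer value is missing. The repair is the case split the paper makes in Lemma~\ref{Lem:OptEijUP}: state the inner claim for an arbitrary cycle with an \emph{arbitrary} single unknown outer variable, and when $x_j$ is adjacent to the clause containing $y_n$, observe that in the shortened cycle the only unknown outer variable is the bridge variable $e$ (since $y_n$ is no longer an outer variable of it); apply the induction hypothesis to derive $e$, and then one further \UP{} step on the transitivity clause $(y_{j-1} \oplus y_n \oplus e \equiv \cdots)$, using $y_{j-1}$ and $e$, yields $y_n$. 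Your formulation fixes the missing variable to be $y_n$ throughout the shortening, so it cannot express this; as written the induction does not go through.
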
}
\newenvironment{retheorem}[1]{\renewcommand{\thetheorem}{#1}\begin{theorem}}{\end{theorem}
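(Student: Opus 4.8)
The plan is to verify the two defining conditions of an \EC-simulation formula separately. Condition~(i), that $\xorclauses$ and $\xorclauses \land \opttrans{\xorclauses}$ have the same models once projected onto $\VarsOf{\xorclauses}$, is the routine part; condition~(ii), that unit propagation on $\xorclauses \land \opttrans{\xorclauses}$ reproduces every \EC-derivable literal, is where the work lies. Throughout I would lean on the xor-cycle characterization of Lemma~\ref{Lem:XorImply} and treat $\opttrans{\xorclauses}$ as a thinned-out version of $\eijtrans{\xorclauses}$, whose correctness is already Theorem~\ref{Thm:EijTransCorrectness}.

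For condition~(i) I would process the clauses in the exact order the pseudo-code adds them and maintain the invariant that each intermediate $\xorclauses'$ is a conservative extension of $\xorclauses$. A fresh bridge clause added at line~8, $(x_i \X x_k \X e_{ik} \equiv \top)$, merely defines the new variable $e_{ik}$ as $x_i \equiv x_k$, so it is conservative. The transitivity clause added at line~9 is, by direct computation, exactly the linear combination $(x_i \X x_j \X e_{ij} \equiv \parity{ij}) \X (x_j \X x_k \X e_{jk} \equiv \parity{jk}) \X (x_i \X x_k \X e_{ik} \equiv \parity{ik})$, in which the variables $x_i,x_j,x_k$ cancel; hence it is entailed by clauses already present, using the fact ${\XC \land \XCB} \Models {\XC \X \XCB}$ recorded above. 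Together these give that every model of $\xorclauses$ extends uniquely over the bridge variables and that projection recovers exactly the models of $\xorclauses$.

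For condition~(ii) the key observation is that the transitivity clauses produced while eliminating a variable $x_j$ are again ternary xor-clauses over the equivalence variables, and that eliminating the inner variables of an xor-cycle one at a time leaves behind a chain of such clauses linking its outer variables. By Lemma~\ref{Lem:XorImply}, an application of \TerUnRuleP{} draws its conclusion from an xor-cycle exactly when all but one of its outer literals are already derived, so it suffices to show that in that situation unit propagation on $\xorclauses \land \opttrans{\xorclauses}$ walks the chain of bridge variables and derives the remaining outer literal. I would prove this by induction on the order in which line~2 extracts variables, with the invariant that once the two ``incoming'' equivalence variables $e_{ij},e_{jk}$ at an eliminated $x_j$ are fixed as units, the transitivity clause of line~9 forces the ``outgoing'' bridge $e_{ik}$; the base equivalences arise when assigning one variable of an original ternary clause turns it into a binary clause that unit propagation simplifies. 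Composing these steps along a cycle reproduces each \TerUnRuleP{} conclusion, and since the only genuinely nontrivial \EC{} rule in 3-xor normal form is \TerUnRuleP{}, an induction on the length of the \EC-derivation then lifts this to all \EC-derivable literals.

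The main obstacle I expect is precisely the cycle-coverage bookkeeping in condition~(ii): one must show that, whatever elimination order line~2 happens to choose, the bridges produced by lines~4--9 always assemble into a chain covering every xor-cycle of $\xorclauses$, with the reused-versus-fresh choice at line~4 carrying the correct parity $\parity{ik}$ at each step. Making this precise --- matching the two clauses through $x_j$ to the two cycle edges incident to $x_j$ and checking that the shortened cycle is again witnessed by clauses in $\xorclauses'$ --- is the delicate part, and it is exactly what would let me reuse the single-cycle-to-full-simulation argument behind Theorem~\ref{Thm:XupTransCorrectness} rather than re-prove unit-propagation completeness from scratch.
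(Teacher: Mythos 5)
Your proposal follows essentially the same route as the paper: condition (ii) is reduced via Lemma~\ref{Lem:XorImply} to simulating \TerUnRuleP{} on an xor-cycle, and the chain-of-bridges induction over the elimination order that you sketch is exactly the content of the paper's Lemmas~\ref{Lem:CycleCompress} and~\ref{Lem:OptEijUP}. The only cosmetic difference is condition (i), where you argue conservativity directly via linear combinations of existing clauses, while the paper simply observes $\opttrans{\xorclauses} \subseteq \eijtrans{\xorclauses}$ and invokes Theorem~\ref{Thm:EijTransCorrectness}.
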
}

%

\subsection{Proof of Theorem~\ref{Theorem:Trees}}

\begin{retheorem}{\ref{Theorem:Trees}}
  If a conjunction of xor-clauses $\xorclauses$ is tree-like,
  then it is \UP-deducible.
\end{retheorem}
\begin{proof}
  %
  Assume that the constraint graph of $\xorclauses$ is a tree;
  the case when it is a union of trees follows straightforwardly.
  Proof by induction on the number of xor-clauses in $\xorclauses$.

  Base cases.
  (i) If $\xorclauses$ is the empty conjunction,
  then it is both tree-like and \UP-deducible.
  (ii) If $\xorclauses$ consists of a single xor-clause $\XC$,
  then it is both tree-like and \UP-deducible.


  Induction hypothesis.
  The lemma holds for all tree-like conjunctions that have at most $n$
  xor-clauses.

  Induction step.
  Take any tree-like xor-clause conjunction $\xorclauses$ with $n+1$ xor-clauses
  and
  any xor-clause $\XC$ in it.
  Let $\xorclauses'$ denote the xor-clause conjunction obtained from $\xorclauses$ by removing $\XC$
  and
  let $\xorclausesI{1}',\ldots,\xorclausesI{q}'$ be the variable-disjoint
  xor-clause clusters of $\xorclauses'$.
  Each $\xorclausesI{i}'$ is obviously tree-like,
  and
  $\XC$ includes exactly one variable $x'_i$ occurring in $\xorclausesI{i}'$.
  Let $Y = \Set{y_1,\ldots,y_m}$ be the set of variables that occur in $\XC$ but
  not in $\xorclauses'$.
  Each model of $\xorclauses'$ is a disjoint union of models of
  $\xorclausesI{1}',\ldots,\xorclausesI{q}'$.
  Take any $\IL,\AL_1,...,\AL_k \in \LitsOf{\xorclauses}$ such that
  ${\xorclauses \land \AL_1 \land ... \land \AL_k} \Models \IL$;
  the case when $\xorclauses \land \AL_1 \land ... \land \AL_k$ is unsatisfiable
  can be proven similarly.
  \begin{enumerate}
  \item
    If ${\VarsOf{\AL_1 \land \ldots \land \AL_k} \cap Y} \subset Y$,
    then the models of $\xorclauses \land \AL_1 \land ... \land \AL_k$,
    when projected to $\VarsOf{\xorclauses} \setminus Y$,
    are the ones of $\xorclauses' \land \AL_1 \land ... \land \AL_k$ as
    the xor-clause $\XC$ can be satisfied in each by letting
    the variable(s) in $Y \setminus \VarsOf{\AL_1 \land \ldots \land \AL_k}$ 
    take appropriate values.
    Now there are two cases to consider:
    \begin{enumerate}
    \item
      If $\VarsOf{\IL} \not \subseteq Y$ but $\VarsOf{\IL} \subseteq \VarsOf{\xorclausesI{i}'}$
      for some $i \in \Set{1,...,q}$,
      then ${\xorclauses \land \AL_1 \land ... \land \AL_k} \UPDeriv \IL$ holds
      because $\xorclausesI{i}'$ is tree-like and \UP-deducible by the induction
      hypothesis.
    \item
      If $\VarsOf{\IL} \subseteq Y$,
      then it must be that case that
      $\VarsOf{\AL_1 \land \ldots \land \AL_k} \cap Y = {Y \setminus \VarsOf{\IL}}$
      and
      ${\xorclauses \land \AL_1 \land ... \land \AL_k} \Models x'_i \X  \parity{i}$
      for each $i \in \Set{1,...,q}$ and some $\parity{i} \in \Set{\F,\T}$.
      Now for each $i \in \Set{1,...,q}$ it holds that
      ${\xorclauses \land \AL_1 \land ... \land \AL_k} \UPDeriv x'_i \X \parity{i}$
      because $\xorclausesI{i}'$ is tree-like and
      \UP-deducible by the induction hypothesis.
      Thus ${\xorclauses \land \AL_1 \land ... \land \AL_k} \UPDeriv \IL$ holds.
    \end{enumerate}
  \item
    If ${\VarsOf{\AL_1 \land \ldots \land \AL_k} \cap Y} = Y$,
    then assume, without loss of generality,
    that the variable of $\IL$ occurs in the sub-tree $\xorclausesI{q}$.
    If $\xorclausesI{q} \land \AL_1 \land \ldots \land \AL_k \Models \IL$,
    then $\xorclausesI{q} \land \AL_1 \land \ldots \land \AL_k \UPDeriv \IL$
    by induction hypothesis
    and
    thus $\xorclauses \land \AL_1 \land \ldots \land \AL_k \UPDeriv \IL$.
    If $\xorclausesI{q} \land \AL_1 \land \ldots \land \AL_k \ModelsNot \IL$,
    then it must be that 
    $\xorclausesI{i} \land \AL_1 \land \ldots \land \AL_k \Models x'_i \X \parity{i}$,
    and thus $\xorclausesI{i} \land \AL_1 \land \ldots \land \AL_k \UPDeriv x'_i \X \parity{i}$ by induction hypothesis,
    for each $i \in \Set{1,...,q-1}$ and some $\parity{i} \in \Set{\F,\T}$.
    After this unit propagation can derive $x'_q \X \parity{q}$
    for a $\parity{q}\in\Set{\F,\T}$ and
    then $\xorclausesI{q} \land \AL_1 \land \ldots \land \AL_k \land (x'_q \X \parity{q}) \Models \IL$
    and thus 
    $\xorclausesI{q} \land \AL_1 \land \ldots \land \AL_k \land (x'_q \X \parity{q}) \UPDeriv \IL$.
  \end{enumerate}%
\qed
\end{proof}

%

\subsection{Proof of Lemma~\ref{Lem:XorImply}}

\begin{relemma}{\ref{Lem:XorImply}}
  Assume an \EC-derivation $\ECderivation$ from
  $\psi = \xorclauses \wedge \AL_1 \wedge ... \wedge \AL_k $,
  where $\xorclauses$ is a 3-xor normal form xor-clause conjunction.
  There is an extension $\ECderivation'$ of $\ECderivation$
  where 
  an xor-clause $ (y \equiv \parity{} \oplus \parity{1}' \oplus ... \oplus \parity{n-1}')$
   is derived using \TerUnRuleP{} on the xor-clauses
  $\{(x_1 \XX x_2 \equiv \parity{1} \XX \parity{1}')$, $...$, $(x_{n-1} \XX x_{n} \equiv \parity{n-1} \XX \parity{n-1}')$, $(x_1 \XX x_n \XX y \equiv \parity{n}) \} $
  if and only if
  there is an xor-cycle
  $\xorcycle{x_1,...,x_n}{y_1,...y_{n-1},y}{\parity{}} \subseteq
  \xorclauses$ where $\parity{} {=} \parity{1} \XX ... \XX \parity{n}$ such that for each $y_i \in \set{y_1,...,y_{n-1}}$ it
  holds that $ \psi \ECDeriv (y_i {\equiv} \parity{i}') $.
\end{relemma}
\begin{proof}
Assume that there is an extension $\ECderivation'$ of $\ECderivation$
where 
an xor-clause $ (y \equiv 
\parity{} \oplus \parity{1}' \oplus ... \oplus \parity{n-1}')$
is derived using
 \TerUnRuleP{} on the xor-clauses in $\phi
=\{(x_1 \XX x_2 \equiv \parity{1} \XX \parity{1}')$, $...$, $(x_{n-1} \XX x_{n} \equiv \parity{n-1} \XX \parity{n-1}')$, $(x_1 \XX x_n \XX y \equiv \parity{n}) \} $.
Since $\xorclauses$ is in 3-xor normal form,
each xor-clause $ (x_i \XX x_j \equiv \parity{i} \XX \parity{i}') \in \phi $ 
is derived from the conjunction $(x_i \XX x_j \XX y_i \equiv \parity{i})
    \wedge (y_i \equiv \parity{i}') $, so 
both of these xor-clauses must be in
$\ECderivation'{}$. This implies that for each variable $y_i$, it holds $
\xorclauses \ECDeriv (y_i \equiv \parity{i}')$. Also, the xor-clauses $(x_1
\XX x_2 \XX y_1 \equiv \parity{1} )$, $(x_2 \XX x_3 \XX y_2 \equiv
\parity{2} )$, ..., $(x_{n-1} \XX x_n \XX y_{n-1} \equiv
        \parity{n-1} )$ must be in $\xorclauses$.
Thus,  the conjunction $\xorclauses$ has an xor-cycle
$\xorcycle{x_1,...,x_n}{y_1,...,y_{n-1},y}{\parity{}}$.

Assume now that there  there is an xor-cycle
$\xorcycle{x_1,...,x_n}{y_1,...,y_{n-1},y}{\parity{}}$ in $\xorclauses$ and
for each variable $y_i \in \set{y_1,...,y_{n-1}}$ it holds $ \psi \ECDeriv
(y_i \equiv \parity{i}')$. 
An extension $\ECderivation'$ to $\ECderivation$ 
such that the xor-clause $ (y \equiv \parity{} \oplus \parity{1}' \oplus ... \oplus \parity{n-1}')$ is derived using $\TerUnRuleP{}$ on the xor-clauses
in $\{ (x_1 \XX x_2 \equiv \parity{1} \XX \parity{1}')$, $...$, $(x_{n-1} \XX x_{n} \equiv \parity{n-1} \XX \parity{n-1}')$, $(x_1 \XX x_n \XX y \equiv \parity{n}) \} $
can be constructed as follows:
\begin{enumerate}
\item Add each xor-clause in the xor-cycle $\xorcycle{x_1,...,x_n}{y_1,...,y_{n-1},y}{\parity{}}$
to $\ECderivation{}$. 
\item For each $y_i \in
\set{y_1,...,y_{n-1}}$, add a number of derivation steps including the xor-clause $ (y_i \equiv \parity{i}')$ to $\ECderivation{}$ because
$\psi \ECDeriv (y_i \equiv \parity{i}')$, 
\item Apply
\TerBinRuleP{} on pairs of xor-clauses $ (x_1 \XX x_2 \XX y_1 {\equiv} \parity{1})
{\wedge} (y_1 {\equiv} \parity{1}')$, $(x_2 \XX x_3 \XX y_2 \equiv \parity{2}) \wedge (y_2 \equiv \parity{2})$, $...$,
$(x_{n-1} \XX x_n \XX y_{n-1} \equiv \parity{n-1}) \wedge (y_{n-1} \equiv \parity{n-1}')$
and thus adding xor-clauses $(x_1 \XX x_2 \equiv \parity{1} \XX \parity{1}')$,
    $\dots$, $(x_{n-1} \XX x_{n} \equiv \parity{n-1} \XX \parity{n-1}')$ to
    \ECderivation{}. 
\item All the premises for \TerUnRuleP{} are in place and we can derive $ (y
\equiv \parity{} \oplus \parity{1}' \oplus ... \oplus \parity{n-1}')$ using \TerUnRuleP{} on $\phi$.
\end{enumerate}
\qed
\end{proof}

%
\newcommand{\BigLinComb}{\sum}
\subsection{Proof of Theorem~\ref{Thm:ColoringTheorem}}
\begin{lemma}[from \cite{LJN:ICTAI2012full}]\label{Lem:LinearCombination}
Let $\psi$ be a conjunction of xor-constraints (xor-clauses).
Now $\psi$ is unsatisfiable if and only if
there is a subset $S$ of xor-constraints (xor-clauses) in $\psi$ such that
$\BigLinComb_{\XC \in S} \XC = (\F \Equal \T)$.
If $\psi$ is satisfiable and $\XCB$ is an xor-constraint (xor-clause),
then $\psi \Models \XCB$ 
if and only if
there is a subset $S$ of xor-constraints (xor-clauses) in $\psi$ such that
$\BigLinComb_{\XC \in S} \XC = \XCB$.
\end{lemma}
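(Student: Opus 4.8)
The plan is to read xor-clauses as affine linear equations over the two-element field $\mathbb{F}_2$. An xor-clause $x_{1} \X \dots \X x_{k} \Equal p$ is the equation $x_{1} + \dots + x_{k} = p$ over $\mathbb{F}_2$, a total truth assignment is a point in $\mathbb{F}_2^{\vars(\psi)}$, and ``satisfies'' is ``solves''. Under this dictionary the linear combination $\XC \X \XCB$ is exactly the $\mathbb{F}_2$-sum of the two equations, and since the only available scalars are $0$ and $1$, a general $\mathbb{F}_2$-linear combination of the clauses of $\psi$ is precisely a subset sum $\BigLinComb_{\XC \in S} \XC$. Both claims then become standard facts of linear algebra: the first is the consistency criterion for a linear system, and the second says that, for a consistent system, the implied equations are exactly the row space of the coefficient-plus-parity matrix. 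I would establish the two ``if'' directions first (they are elementary), then the harder ``only if'' direction of the unsatisfiability claim, and finally reduce the ``only if'' direction of the consequence claim to it.

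\textbf{The easy directions.} For the ``if'' directions I would use only the fundamental property ${\XC \land \XCB} \Models {\XC \X \XCB}$ recorded at the start of this appendix, iterated by induction on $|S|$ to get $\bigwedge_{\XC \in S} \XC \Models \BigLinComb_{\XC \in S} \XC$. Since $S \subseteq \psi$ gives $\psi \Models \bigwedge_{\XC \in S}\XC$, we obtain $\psi \Models \BigLinComb_{\XC \in S}\XC$. Hence if some subset sums to $(\F \Equal \T)$ then $\psi \Models \bot$, so $\psi$ is unsatisfiable; and if some subset sums to $\XCB$ then $\psi \Models \XCB$. The only point to note is that an assignment total for $\psi$ is total for every subset sum, as such sums introduce no new variables, so the totality requirement in the definition of $\Models$ for xor-clauses is met.

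\textbf{Main obstacle: the first ``only if''.} The crux is showing that an unsatisfiable $\psi$ admits a subset summing to $(\F\Equal\T)$. I would run Gaussian elimination over $\mathbb{F}_2$ on the equation system of $\psi$, carrying with each working row the subset of original clauses of which it is the $\mathbb{F}_2$-sum. Unsatisfiability means the reduced system contains a row whose variable part is empty but whose parity is $\T$, i.e.\ the equation $(\F \Equal \T)$; the subset $S$ attached to that row is exactly a collection of clauses of $\psi$ with $\BigLinComb_{\XC \in S}\XC = (\F \Equal \T)$. The care needed is bookkeeping: each elementary row operation adds two tracked subsets, which over $\mathbb{F}_2$ is their symmetric difference, so the invariant ``this row equals the subset sum of its tracked clauses'' is preserved throughout. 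One could instead invoke the rank criterion $\operatorname{rank}(A) \neq \operatorname{rank}([A \mid b])$ directly, but the constructive elimination is what makes the witnessing subset explicit.

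\textbf{The second ``only if'' by reduction.} Assuming $\psi$ satisfiable and $\psi \Models \XCB$, let $\overline{\XCB} = \XCB \X (\F \Equal \T)$ be the xor-clause obtained from $\XCB$ by flipping its parity; no total assignment satisfies both $\XCB$ and $\overline{\XCB}$, so $\psi \Models \XCB$ is equivalent to $\psi \land \overline{\XCB}$ being unsatisfiable. Applying the first claim to $\psi \land \overline{\XCB}$ yields a subset $S'$ of its clauses with $\BigLinComb_{\XC \in S'}\XC = (\F \Equal \T)$. Because $\psi$ is satisfiable, the easy direction above forbids any subset of $\psi$ alone from summing to $(\F \Equal \T)$, so $\overline{\XCB} \in S'$; writing $S = S' \setminus \{\overline{\XCB}\} \subseteq \psi$ and adding $\overline{\XCB}$ to both sides gives $\BigLinComb_{\XC \in S}\XC = (\F\Equal\T) \X \overline{\XCB} = \XCB$, the desired subset of $\psi$. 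The only thing to verify here is that parity-flipping is the linear operation $\X (\F \Equal \T)$ and is an involution, so that this rearrangement over $\mathbb{F}_2$ is valid.
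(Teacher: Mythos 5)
Your proof is correct, and since the paper itself states this lemma without proof (it is imported from \cite{LJN:ICTAI2012full}), your $\mathbb{F}_2$ linear-algebra argument is exactly the standard one such a citation rests on: the translation of xor-clauses to affine equations is faithful (the totality condition in the paper's definition of $\models$ makes satisfaction coincide with solvability over $\mathbb{F}_2$), the subset-tracking Gaussian elimination correctly witnesses the inconsistency certificate, and the parity-flip reduction of entailment to unsatisfiability is sound, including the key observation that satisfiability of $\psi$ forces the flipped clause to belong to the extracted subset $S'$ so that removing it leaves a subset of $\psi$ alone. All the bookkeeping details you flag (symmetric difference of tracked subsets under row operations, parity-flipping as an involutive linear operation, and the absence of new variables in subset sums) check out.
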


\begin{retheorem}{\ref{Thm:ColoringTheorem}}
  If a 3-xor normal form xor-clause conjunction $\xorclauses$
  is cycle-partitionable,
  then it is \SUBST{}-deducible (and thus also \EC-deducible).
\end{retheorem}
\begin{proof}
Let $\xorclauses$ be a cycle-partitionable conjunction in 3-xor normal form. 
We assume that $\xorclauses \wedge \AL_1 \wedge \dots \wedge \AL_k $ is
satisfiable. The case when $\xorclauses \wedge \AL_1 \wedge \dots \wedge \AL_k$
can be proven similarly.
Assume that $\xorclauses \wedge \AL_1 \wedge \dots
\dots \wedge \AL_n \Models \IL$.  
By Lemma~\ref{Lem:LinearCombination}, there is a subset $S$ of xor-clauses in
$\xorclauses$ such that $\BigLinComb_{\XC \in S} \XC = \IL$.
Since $\xorclauses$ is cycle-partitionable, it clearly holds
that $\xorclauses'$ is cycle-partitionable also. Let \twoColoring{} be a cycle-partitioning for $\xorclauses'$.
 The proof proceeds by case analysis on the structure of the constraint graph
of $\xorclauses'$. Because $\xorclauses'$ is cycle-partitionable,
the constraint graph of $\xorclauses'$ does not have any cycles involving the variables in $\twoColorExt{}$.
This means that we can partition the conjunction $\xorclauses'$ into a sequence
of pairwise disjoint conjunctions of xor-clauses $\phi'_1,\phi'_2,...,\phi'_k$
such that the constraint graph of each conjunction $ \phi'_i$ is a connected
component, $\xorclauses' = \phi'_1 \cup ... \cup \phi'_k $, and for all
distinct pairs $\phi'_i,\phi'_j$ it holds that $|\VarsOf{\phi'_i} \cap
\VarsOf{\phi'_j}| \leq 1$ and $|\VarsOf{\phi'_i} \cap \VarsOf{\phi'_j}|
\subseteq \twoColorExt{}$. 
\begin{enumerate}
\item If it holds that 
$ (\VarsOf{C_1} \cup ... \cup \VarsOf{C_m}) \cap \twoColorExt{}
\subseteq \VarsOf{C}$
    it suffices to consider any conjunction $\phi'_i$
for which $\VarsOf{l} \subseteq \VarsOf{\phi'_i}$ holds, because 
$\VarsOf{\phi'_i} \cap \twoColorExt{} \subseteq \VarsOf{\AL_1,...,\AL_k,\IL}$,
    and thus $\phi'_i \wedge \AL_1 \wedge ... \wedge \AL_n \Models \IL $.   
We consider
    the cases:
\begin{enumerate}
\item If the constraint graph of $\phi'_i$ is tree-like, then by Theorem~\ref{Theorem:Trees} it holds that $\xorclauses \wedge \AL_1 \wedge ... \wedge \AL_n \SUBSTDeriv \IL$.
\item Otherwise, the constraint graph of $\phi'_i$ is not tree-like, and has at least one xor-cycle.
Due to the cycle-partitioning and the presence of at least one xor-cycle, the conjunction $\phi'_i$ can be 
partitioned into a finite set of partially overlapping xor-cycles $\phi'_i =
\xorcycleS{X_1}{Y_1}{\parity{1}} \cup ... \cup \xorcycleS{X_l}{Y_l}{\parity{l}}
$. By definition, for each xor-cycle $\xorcycleS{X_i}{Y_i}{\parity{i}}$ it
holds that each variable $v \in X_i$ has exactly two occurrences in the
xor-cycle $\xorcycleS{X_i}{Y_i}{\parity{i}}$. Let $\twoColorInt{}' = \VarsOf{C}
\cap \twoColorInt{} \cap \VarsOf{\phi'_i}$. If $\twoColorInt{}'\not =
\emptyset$, then there exists a variable $x \in \twoColorInt{}'$ with three
occurrences in $\phi'_i$ in the xor-clauses $C_a = (x \oplus x_a \oplus y_a
\equiv \parity{a}) $, $C_b = (x \oplus x_b \oplus y_b \equiv
\parity{b})$, and $C_c = (x \oplus x_c \oplus y_c \equiv
\parity{c})$ because $\xorclauses$ is in 3-xor normal form. There are
two xor-cycles $\xorcycleS{X_i}{Y_i}{\parity{i}}$ and $\xorcycleS{X_j}{Y_j}{\parity{j}}$ such
that $ C_a $ and $C_b$ are in $\xorcycleS{X_i}{Y_i}{\parity{i}}$ and $C_c$ is in
$\xorcycleS{X_j}{Y_j}{\parity{j}}$. The xor-cycles $\xorcycleS{X_i}{Y_i}{\parity{i}}$
and $\xorcycleS{X_j}{Y_j}{\parity{j}}$ overlap, so there is another variable $x' \in \twoColorInt{}'$ also with three occurrences in $\phi'_i$ in the xor-clauses
 $C_a' = (x' \oplus x_a' \oplus y_a'
\equiv \parity{a}') $, $C_b' = (x' \oplus x_b' \oplus y_b' \equiv
\parity{b}')$, and $C_c' = (x' \oplus x_c' \oplus y_c' \equiv
\parity{c}')$ such that $C_a'$ and $C_b'$ are in $\XC{X_i,Y_i,\parity{i}}$ and
$C_c'$ is in $\XC{X_j,Y_j,\parity{j}}$. Thus, the number of inner variables of $\phi'_i$ in the linear combination is even, that is, $|\VarsOf{C} \cap \twoColorInt{} \cap \VarsOf{\phi'_i}| \equiv 0 \mod 2 $.

We consider two cases:
\begin{enumerate}
\item 
If $\VarsOf{\IL} \in \twoColorInt{}$, then the intersection
$\VarsOf{\AL_1,...,\AL_n} \cap \twoColorInt{}$ is non-empty, because $|\VarsOf{C} \cap
\twoColorInt{} \cap \VarsOf{\phi'_i}| \equiv 0 \mod 2 $. It follows that there is an xor-cycle $\xorcycleS{X}{Y}{\parity{}}$ such that $\VarsOf{\IL} \in X$, and
$\VarsOf{\AL_1,...,\AL_n} \cap X$ is non-empty. We can construct a
\SUBST{}-derivation $\SUBSTderivation{}$ from $\xorcycleS{X}{Y}{\parity{}} \wedge
\AL_1 \wedge ...  \wedge \AL_n $ by repeatedly applying $\unitruleP{}$
or $\unitruleN{}$
to all xor-clauses $C'$ in $\xorcycleS{X}{Y}{\parity{}}$ such that
$\VarsOf{C'} \cap \VarsOf{\AL_1,...,\AL_n} \cap \twoColorExt{} \not =
\emptyset$. Now there are $|X|$ xor-clauses of the form
$ (x_i \oplus x_j \oplus \parity{i})$ in the \SUBST{}-derivation $\SUBSTderivation$. Because
$\VarsOf{\AL_1,...\AL_n} \cap X$ is non-empty, we can add the xor-clauses
$ (\AL_1), ..., (\AL_n)$ to the \SUBST{}-derivation and then continue applying
\unitruleP{} and \unitruleN{} until $\IL$ is derived. It follows that
$\xorclauses \wedge \AL_1 \wedge ... \wedge \AL_n \SUBSTDeriv \IL$.
 
\item Otherwise, $\VarsOf{\IL} \in \twoColorExt{}$ and there is an xor-cycle
$\xorcycleS{X}{Y}{\parity{}}$ such that $\VarsOf{\IL} \in Y$. It holds that $ Y
\backslash{\VarsOf{\IL}} \subseteq \VarsOf{\AL_1,\dots,\AL_n}$, so by
Lemma~\ref{Lem:XorImply} it holds that 
$\xorclauses \wedge \AL_1 \wedge ...
\wedge \AL_n \ECDeriv \IL$ and thus also
$\xorclauses \wedge \AL_1 \wedge ...
\wedge \AL_n \SUBSTDeriv \IL$.
\end{enumerate}
\end{enumerate}
\item Otherwise, it holds that 
    $ (\VarsOf{C_1} \cup ... \cup \VarsOf{C_m}) \cap \twoColorExt{}
\not \subseteq \VarsOf{C}$.
Then there must be at least one conjunction
$\phi'_i \in \set{\phi'_1, ...,\phi'_k}$ such that $(\VarsOf{\phi'_i} \cap \twoColorExt{})
\backslash \VarsOf{\AL_1,...,\AL_n} = \set{y}$ for some variable $y$ . By a
similar reasoning as above we can prove that
$\phi'_i \wedge \AL_1 \wedge ... \wedge \AL_n \SUBSTDeriv (y \oplus \parity{y})$. This can be applied repeatedly until it has
been proven for some conjunction $\phi'_j$ such that  $\VarsOf{\IL} \in
\VarsOf{\phi'_j}$ and for each variable $v \in (\VarsOf{\phi'_j} \cap \twoColorExt{}
        \backslash \VarsOf{\IL})$ it holds that $\xorclauses \wedge \AL_1
\wedge ... \wedge \AL_n \SUBSTDeriv (v \oplus \parity{v})$. Then, again by similar
reasoning as above we can prove that $\xorclauses \wedge \AL_1
\wedge ... \wedge \AL_n \SUBSTDeriv \IL $.
\end{enumerate}
\qed
\end{proof}

%

\subsection{Proof of Theorem~\ref{Thm:XupTransCorrectness}}

\begin{retheorem}{\ref{Thm:XupTransCorrectness}}
  If $\xorclauses$ is a 3-xor normal form xor-clause conjunction,
  then $\xuptrans{\xorclauses}$ is an \EC{}-simulation formula for $\xorclauses$.
\end{retheorem}
\begin{proof}
%
We first prove that
the satisfying truth assignments of $\xorclauses$ are exactly
the ones of $\xorclauses \land \xuptrans{\xorclauses}$ when projected to
to $\VarsOf{\xorclauses}$. It holds by definition that 
$\xorclauses \land \xuptrans{\xorclauses} {\Models} \xorclauses$, so
it suffices to show that $ \xorclauses {\Models} \xuptrans{\xorclauses}$. 
Each xor-clause $(y_1 \oplus ... \oplus y_n {\equiv} \parity{}) {\in} \xuptrans{\xorclauses}$ corresponds to an 
xor-cycle $ \xorcycle{x_1,...,x_n}{y_1,...,y_n}{\parity{}}$, that is,
a conjunction of xor-clauses $(x_1 \XX x_2 \XX y_1 {\equiv} \parity{1}) \wedge\dots\wedge(x_{n-1} \XX  x_{n} \XX y_{n-1} \equiv \parity{n-1})\wedge(x_1 \XX x_n \XX y_n \equiv \parity{n})
\subseteq \xorclauses$ where $\parity{} = \parity{1} \oplus ... \oplus \parity{n}$. Observe that $ (y_1 \oplus ... \oplus y_n \equiv \parity{})$ is a 
linear combination of the xor-clauses in $\xorcycle{x_1,...,x_n}{y_1,...,y_n}{\parity{}}$, so it holds that
$\xorcycle{x_1,...,x_n}{y_1,...,y_n}{\parity{}} \Models (y_1 \oplus ... \oplus y_n \equiv \parity{})$ and $\xorclauses \Models \xuptrans{\xorclauses}$.
   
We now prove that if $\IL$ is \EC-derivable from $\xorclauses \land (\AL_1)
   \land ... \land (\AL_k)$, then $\IL$ is \UP-derivable from $(\xorclauses
            \land \xuptrans{\xorclauses}) \land (\AL_1) \land ...  \land
    (\AL_k)$.
Assume that a literal $\IL$ is \EC-derivable
from $ \psi = \xorclauses \land (\AL_1) \land ... \land (\AL_k)$. 
This implies
that there is an \EC-derivation \ECderivation{} from $\psi$ and the literal
$\IL$ is derived from the xor-clauses $C_1, ..., C_n$ in \ECderivation{}
using one of the inference rules of \EC.
We prove by structural induction that
$\IL$ is \UP-derivable from $\psi' = \psi \land \xuptrans{\xorclauses} \land (\AL_1) \land ... \land (\AL_k)$. The induction hypothesis is that there is a
\UP-derivation \UPderivation{} from $\psi'$ such that the xor-clauses
$C_1,...,C_n$ are in \UPderivation{}.  The inference rules
\UnConflictRuleP{},  \TerBinRuleP{}, and \BinUnRuleP{} are special cases of
\unitruleP{} and \unitruleN{}, and \ConflictRuleP{} can be simulated by \TerUnRule{} and \UnConflictRuleP{}, 
    so it suffices to show that the inference rule
\TerUnRule{} can be simulated with \unitruleP{} and
\unitruleN{}. In the case that $\IL = (y \equiv \parity{} \oplus \parity{1}' \oplus ... \oplus \parity{n-1}') $ is derived using
the inference rule \TerUnRule{}, by Lemma~\ref{Lem:XorImply} there must be an
xor-cycle $\xorcycle{x_1,...,x_n}{y_1,...,y_{n-1},y}{\parity{}}$ in
$\xorclauses$
such that for each $ y_i \in \set{y_1,...,y_{n-1}}$ it holds that
$\psi \ECDeriv (y_i \equiv \parity{i}')$. By induction hypothesis
it holds that $\psi' \UPDeriv (y_i \equiv \parity{i}') $ for each $ y_i \in
\set{y_1,...,y_{n-1}}$.  The xor-clause $ (y_1 \XX ... \XX y_{n-1} \XX y \equiv
        \parity{})$ is in $\xuptrans{\xorclauses}$, so $\psi' \UPDeriv (y
           \equiv \parity{} \oplus \parity{1}' \oplus ... \oplus
            \parity{n-1}')$. It follows that $\IL$ is
UP-derivable from $(\xorclauses \wedge \xuptrans{\xorclauses}) \land (\AL_1) \land ... \land (\AL_k)$.
\qed
\end{proof}

%

\subsection{Proof of Lemma~\ref{Lem:DiamondCycles}}

\begin{relemma}{\ref{Lem:DiamondCycles}}
  Any \EC{}-simulation formula $\psi$ for $\diamondcycles$
  with $\VarsOf{\psi} = \VarsOf{\diamondcycles}$
  contains at least $2^n$ xor-clauses.
\end{relemma}
\begin{proof}
Let $\psi$ to be an \EC{}-simulation formula for $\diamondcycles$.
  Observe the constraint graph of $\diamondcycles$ in
  Fig.~\ref{Fig:Simulation}(b).
  %
  There are two ways to traverse each ``diamond gadget'' that connects the
  variables $x_i$ and $x_{i+1}$, so there are $2^n$ xor-cycles of the form
  $\xorcycleS{X}{Y}{\top}$ where $X =
  \Tuple{x_1,x_1',x_2,x_2',...,x_n',x_{n+1},x_1}$, $x_i' \in \set{x_{i,b},
    x_{i,e}}$, $Y=\Tuple{y_1,y_1',...,y_n,y_n',y}$, $\Tuple{y_i,y_i'} {\in}
  \set{\Tuple{x_{i,a},x_{i,c}}, \Tuple{x_{i,d},x_{i,f}}}$. 
  By Lemma~\ref{Lem:XorImply}, for each such xor-cycle $\xorcycleS{X}{Y}{\top}$
  there is an EC-derivation from
  $\diamondcycles \wedge (Y \backslash \set{y})$
  where $y$ can be added using \TerUnRuleP{} on the xor-clauses
  $\{(x_1 \XX x_2 \equiv \top)$,
      $(x_2 \XX x_3 \equiv \top)$,
      $...$,
      $(x_n \XX x_{n+1} \equiv \top)$,
      $(x_1 \XX x_{n+1} \XX y \equiv \top)\}$.
Now, let $\xorcycleS{X}{Y}{\top}$ be any such xor-cycle.
Let $ Y_a = Y \backslash \set{y}$. 
 Note that for each variable $x \in \VarsOf{\diamondcycles} \backslash Y$ it holds that
    $\diamondcycles \wedge Y_a \not \Models (x)$
    and 
    $ \diamondcycles \wedge Y_a \not \Models (x \oplus \top)$, so for each xor-clause $C$ in $\psi$ 
    such that $ \VarsOf{C} \backslash Y \not = \emptyset$, it holds 
    that $ \diamondcycles \wedge Y_a \wedge C \UPDerivNot (x)$ and
    $ \diamondcycles \wedge Y_a \wedge C \UPDerivNot (x \oplus \top)$, so xor-clauses in $\psi$
    that have other variables than the variables in $Y$ cannot help in deriving the xor-clause
    $(y)$. Also, there cannot be an xor-clause $C$ in $\psi$ such that $|\VarsOf{C}| < |Y|$, $y \in \VarsOf{C}$, 
      and $ \VarsOf{C} \backslash Y = \emptyset$, because then for some literal
      $\IL$ and for some xor-assumptions $\AL_1,...,\AL_k$ it would hold that $
      \diamondcycles \wedge \psi \wedge \AL_1 \wedge ... \wedge \AL_k \Models \IL $ but 
      $ \diamondcycles \wedge \AL_1 \wedge ... \wedge \AL_k \not \Models  \IL
      $, and thus $\psi$ would not be an \EC{}-simulation formula for
      $\diamondcycles$.
Clearly, there is exactly one xor-clause $C = \bigoplus Y $ such that $y \in C,
\VarsOf{C} = Y$ such that $\diamondcycles \wedge (Y \backslash \set{y})
\wedge C \UPDeriv (y)$. For each xor-cycle $\xorcycleS{X}{Y}{\top}$ it holds that the corresponding
xor-clause $C$ must be in the \EC{}-simulation formula $\psi$, so $\psi$
must have at least $2^n$ xor-clauses.  
\qed
\end{proof}

%

\subsection{Proof of Theorem~\ref{Thm:EijTransCorrectness}}

\begin{retheorem}{\ref{Thm:EijTransCorrectness}}
  If $\xorclauses$ is an xor-clause conjunction in 3-xor normal form,
  then $\eijtrans{\xorclauses}$ is an \EC{}-simulation formula for $\xorclauses$.
\end{retheorem}
\begin{proof}
We first prove that
the satisfying truth assignments of $\xorclauses$ are exactly
the ones of $\xorclauses \land \eijtrans{\xorclauses}$ when projected to
$\VarsOf{\xorclauses}$. It holds by definition that 
$\xorclauses \land \eijtrans{\xorclauses} \Models \xorclauses$, so
it suffices to show that if $\TA$ is a satisfying truth assignment
for $ \xorclauses$, it can be extended to a satisfying truth assignment $\TA'$
for $\eijtrans{\xorclauses}$. Assume that $\TA$ is a truth assignment
such that $ \TA \Models \xorclauses$. 
Let $\TA' $ be a truth assignment identical to $\TA$ except for
the following additions. Let $ x_i,x_j $ be any two distinct variables in $\VarsOf{\xorclauses}$.
The conjunction $\eijtrans{\xorclauses}$ has a corresponding variable $ e_{ij}$. 
If $ \TA \Models (x_i \oplus x_j \oplus \top)$, then add $ e_{ij} $ to $\TA'$.
Otherwise add $ \neg e_{ij}$ to $\TA'$.
For each xor-clause $(x_i \X x_j \X x_k \equiv \parity{}) \in \xorclauses$,
the conjunction  $\eijtrans{\xorclauses}$ contains three xor-clauses 
  $  (e_{ij} \XX x_k \XX \T \equiv \parity{}) $,
  $ (e_{ik} \XX x_j \XX \T \equiv \parity{}) $, and
  $(x_i \XX e_{jk} \XX \T \equiv \parity{}) $.
It holds that $ \TA' \Models e_{ij} \leftrightarrow (x_i \oplus x_j \oplus \top)$,
        $ \TA' \Models e_{jk} \leftrightarrow (x_j \oplus x_k \oplus \top)$,
            and $ \TA' \Models e_{ik} \leftrightarrow (x_i \oplus x_k \oplus \top)$. By substituting $e_{ij}$ with $(x_i \oplus x_j \oplus \top)$, we get
$ \TA' \Models (e_{ij} \XX x_k \XX \top \equiv \parity{}) \leftrightarrow
  ((x_i \oplus x_j \oplus \top) \XX x_k \XX \top \equiv \parity{})$.
  By simplifying this equivalence we get 
$ \TA' \Models (e_{ij} \XX x_k \XX \top \equiv \parity{}) \leftrightarrow
  (x_i \oplus x_j \XX x_k \equiv \parity{})$, and since
  $ \TA' \Models  (x_i \oplus x_j \XX x_k \equiv \parity{})$,
  it follows that $\TA' \Models (e_{ij} \XX x_k \XX \top \equiv \parity{})$.
 The reasoning for the other two xor-clauses is analogous, so
    it also holds that 
  $ \TA' \Models (e_{ik} \XX x_j \XX \T \equiv \parity{}) $, and
  $ \TA' \Models (x_i \XX e_{jk} \XX \T \equiv \parity{}) $.
  The conjunction $\eijtrans{\xorclauses}$ has also an xor-clause 
$(e_{ij} \X e_{jk} \X e_{ik} \equiv \top)$
  for
  each distinct triple $x_i,x_j,x_k \in \VarsOf{\xorclauses}$.
  Since $ \TA' \Models e_{ij} \leftrightarrow (x_i \oplus x_j \oplus \top)$,
        $ \TA' \Models e_{jk} \leftrightarrow (x_j \oplus x_k \oplus \top)$,
            and $ \TA' \Models e_{ik} \leftrightarrow (x_i \oplus x_k \oplus \top)$,
                it holds that $ \TA' \Models (e_{ij} \X e_{jk} \X e_{ik} \equiv \top)
                \leftrightarrow ((x_i \oplus x_j \oplus \top)
                                \X (x_j \oplus x_k \oplus \top)
                                \X (x_i \oplus x_k \oplus \top))$.
                By simplifying the equivalence we get,
        $ \TA' \Models (e_{ij} \X e_{jk} \X e_{ik} \equiv \top) \leftrightarrow (\top)$,
  and further $ \TA' \Models (e_{ij} \X e_{jk} \X e_{ik} \equiv \top) $.

We now prove that if $\IL$ is \EC-derivable from $\xorclauses \land \AL_1
   \land ... \land \AL_k$, then $\IL$ is \UP-derivable from $\xorclauses
            \land \eijtrans{\xorclauses} \land \AL_1 \land ...  \land
    \AL_k$.
 Assume that a literal $\IL$ is \EC-derivable from $ \psi = \xorclauses \land
 \AL_1 \land ... \land \AL_k$.  This implies that there is an
 \EC-derivation \ECderivation{} from $\psi$ and the literal $\IL$ is derived
 from the xor-clauses $C_1, ..., C_n$ in \ECderivation{} using one of the
 inference rules of \EC.  We prove by structural induction that $\IL$ is
 \UP-derivable from $\psi' = \xorclauses \land \eijtrans{\xorclauses} \land \AL_1 \land ...
 \land \AL_k$. The induction hypothesis is that there is a \UP-derivation
 \UPderivation{} from $\psi'$ such that the xor-clauses $C_1,...,C_n$ are in
 \UPderivation{}.  The inference rules \UnConflictRuleP{},  \TerBinRuleP{}, and
 \BinUnRuleP{} are special cases of \unitruleP{} and \unitruleN{}, 
 and \ConflictRuleP{} can be simulated by \TerUnRule{} and \UnConflictRuleP{}, 
 so it
 suffices to show that the inference rule \TerUnRule{} 
 can be simulated with \unitruleP{} and \unitruleN{}. In the case that $\IL =
 (y \equiv \parity{}\oplus \parity{1}' \oplus ... \oplus \parity{n-1}')$ is derived using the inference rule \TerUnRule{}, by
 Lemma~\ref{Lem:XorImply} there must be an xor-cycle
 $\xorcycle{x_1,...,x_n}{y_1,...,y_{n-1},y}{\parity{}}$ in $\xorclauses$ 
 where $ \parity{} = \parity{1} \oplus ... \oplus \parity{n}$  such
 that for each $ y_i \in \set{y_1,...,y_{n-1}}$ it holds that $\psi \ECDeriv
 (y_i \equiv \parity{i}')$. By induction hypothesis it holds that $\psi' \UPDeriv (y_i \equiv \parity{i}')$ for each $ y_i 
\in \set{y_1,...,y_{n-1}}$. It follows that for each xor-clause $C \in
 \{(e_{12} \oplus \parity{1} \oplus \parity{1}' \oplus \top), (e_{23} \oplus \parity{2} \oplus \parity{2}' \oplus \top), ..., (e_{{n-1}{n}} \oplus
         \parity{n-1} \oplus \parity{n-1}' \oplus \top)\}$ it holds that $ \psi' \UPDeriv C $.  The conjunction
 $\eijtrans{\xorclauses}$ has an xor-clause $ (e_{ij} \XX e_{jk} \XX e_{ik} \equiv \top)$
 for each triple of distinct variables $x_i,x_j,x_k \in \VarsOf{\xorclauses}$,
 so by repeatedly applying \unitruleP{} and \unitruleN{} we can derive $
    (e_{1n} \equiv \parity{} \oplus \parity{1}' \oplus ... \oplus
     \parity{n-1}')$, and then $ (y \equiv \parity{} \oplus \parity{1}' \oplus
         ... \oplus \parity{n-1}') $.
By induction it follows that $\IL$ is UP-derivable from $\xorclauses \land \eijtrans{\xorclauses} \land \AL_1
    \land ... \land \AL_k$.
\qed
\end{proof}
%

%

\subsection{Proof of Theorem~\ref{Thm:OptTransCorrectness}}

\begin{lemma}\label{Lem:CycleCompress}
Given a conjunction of xor-clauses $\xorclauses$ in 3-xor normal form ,
an elimination order $\Tuple{x_1,\dots,x_n}$ for the algorithm
$\opttransname$,
  and an xor-cycle
$\xorcycleS{X}{Y}{\parity{}}$, $|X| \geq 3$ in $\xorclauses \wedge \opttrans{\xorclauses}$ where
$X \subseteq \VarsOf{\xorclauses}$, there are variables $x \in X$, $y',y''
\in Y$ such that 
$ (y \oplus y' \oplus y'' \equiv \parity{}')$ is in 
    $\xorclauses \wedge \opttrans{\xorclauses} $ 
    and if $|X| > 3$, the conjunction $\xorclauses \wedge \opttrans{\xorclauses}$ has an xor-cycle $ \xorcycleS{X\backslash \set{x}}{(Y \cup \set{y}) \backslash \set{y',y''}}{\parity{}' \oplus \parity{}}$.
\end{lemma}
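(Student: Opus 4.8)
The plan is to realize the intuition stated just before the lemma: a run of $\opttransname$ shortens a long xor-cycle by eliminating exactly one of its inner variables and bridging over it. So I would fix the xor-cycle $\xorcycleS{X}{Y}{\parity{}}$ with $X = \Set{x_1,\dots,x_m}$, $m = |X| \geq 3$, written as clauses $C_i = (x_i \XX x_{i+1} \XX y_i \Equiv \parity{i})$ (indices modulo $m$, with $\parity{} = \parity{1} \XX \dots \XX \parity{m}$), and let $x := x_j$ be the \emph{first} inner variable of the cycle that the elimination order $\Tuple{x_1,\dots,x_n}$ removes from $V$.

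First I would argue that, at the iteration of line~2 of Fig.~\ref{Fig:OptEij} in which $x_j$ is extracted, both cycle clauses incident to $x_j$, namely $C_{j-1} = (x_{j-1} \XX x_j \XX y_{j-1} \Equiv \parity{j-1})$ and $C_j = (x_j \XX x_{j+1} \XX y_j \Equiv \parity{j})$, are already present in the current $\xorclauses'$. Since $x_j$ is the first inner variable eliminated and $X \subseteq \VarsOf{\xorclauses}$, the two inner neighbours $x_{j-1}, x_{j+1}$ are still in $V$. Hence line~3 processes the pair $C_{j-1}, C_j$ with the role assignment $x_i := x_{j-1}$, $e_{ij} := y_{j-1}$, $x_k := x_{j+1}$, $e_{jk} := y_j$: the two inner neighbours serve as the in-$V$ endpoints and the two outer variables as the equivalence variables.

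Next I would read off the two clauses the algorithm guarantees. Lines~4--8 ensure that a bridge clause $B = (x_{j-1} \XX x_{j+1} \XX e_{ik} \Equiv \parity{ik})$ lies in $\xorclauses'$ afterwards (either found in line~5 or freshly added in line~8), and line~9 adds the transitivity clause $(y_{j-1} \XX y_j \XX e_{ik} \Equiv \parity{j-1} \XX \parity{j} \XX \parity{ik})$. Taking $y := e_{ik}$, $y' := y_{j-1}$, $y'' := y_j$, and $\parity{}' := \parity{j-1} \XX \parity{j} \XX \parity{ik}$ yields the required $(y \XX y' \XX y'' \Equiv \parity{}')$ in $\xorclauses \wedge \opttrans{\xorclauses}$. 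For the case $m > 3$ I would then replace $C_{j-1}$ and $C_j$ in the cycle by the single bridge clause $B$; the resulting conjunction $C_1, \dots, C_{j-2}, B, C_{j+1}, \dots, C_m$ is an xor-cycle on inner variables $X \setminus \Set{x_j}$ and outer variables $(Y \cup \Set{y}) \setminus \Set{y', y''}$, all of whose clauses lie in $\xorclauses \wedge \opttrans{\xorclauses}$. A direct parity computation gives its label $\parity{} \XX \parity{j-1} \XX \parity{j} \XX \parity{ik} = \parity{} \XX \parity{}'$, as claimed; moreover $m > 3$ is exactly what keeps the shortened cycle at length $\geq 3$ and keeps $x_{j-1} \neq x_{j+1}$.

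The main obstacle I anticipate is the first step: rigorously establishing that $C_{j-1}$ and $C_j$ are present when $x_j$ is eliminated, and that they trigger line~3 with the intended roles. This needs a careful invariant on the algorithm, recording that $\opttransname$ only adds clauses (never deletes) and that the new variables created in line~7 are never inner cycle variables, so that a clause joining two inner variables through an outer variable cannot first appear \emph{after} $x_j$ has left $V$. A secondary technical point is to check that a reused bridge variable $e_{ik}$ from line~5 does not collide with the remaining inner or outer variables in a way that degenerates the shortened cycle; handling this cleanly may rely on the 3-xor normal form assumption that each pair of clauses shares at most one variable.
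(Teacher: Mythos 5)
Your proof follows essentially the same route as the paper's: pick the first cycle-inner variable extracted by the elimination order, observe that its two incident cycle clauses are processed at line~3 (the inner neighbours being still in $V$), obtain the bridge clause from lines~4--8 and the transitivity clause from line~9, and splice the bridge clause into the cycle to get the shortened xor-cycle, with the same parity bookkeeping. The obstacle you flag---rigorously establishing that the two incident clauses already lie in $\xorclauses'$ when $x_j$ is extracted, rather than being added later with $x_j$ in an outer role---is left unaddressed by the paper as well, whose proof simply asserts that the algorithm enters line~3 on those clauses; on that point your write-up is, if anything, the more candid of the two.
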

\begin{proof}
Assume that $\opttrans{\xorclauses}$ has an xor-cycle
$\xorcycleS{X}{Y}{\parity{}}$, $|X| \geq 3$ where $X \subseteq
\VarsOf{\xorclauses}$. 
Let $x_j$ be the first variable in the elimination order sequence $\Tuple{x_1,\dots,x_n}$ such that $x_j$ is also in $X$.
There are two different variables $x_i,x_k \in X$ such that $ (x_i \oplus x_j
\oplus e_{ij} \equiv \parity{ij}), (x_j \oplus x_k \oplus e_{jk} \equiv
\parity{jk})$
are in $\xorcycleS{X}{Y}{\parity{}}$.
The algorithm in Fig.~\ref{Fig:OptEij}
enters line 3 and $x_i,x_j,x_k \in X, e_{ij},e_{jk} \in Y$ hold. After
this, the conjunction $\xorclauses \wedge \opttrans{\xorclauses}$ contains the xor-clauses $
(e_{ij} \oplus e_{jk} \oplus e_{ik} \equiv \parity{ij} \oplus \parity{jk}
 \oplus \parity{ik})$ and $ (x_i \oplus x_k \oplus
e_{ik} \equiv \parity{ik})$. It holds that $\parity{} =
\parity{12} \oplus \parity{23} ... \oplus \parity{{n-1}n} \oplus \parity{1n} $.
 Thus, if $|X| > 3$, there must be an xor-cycle $\xorcycleS{X \backslash \set{x_j}}{(Y \cup
         \set{e_{ik}}) \backslash \set{e_{ij},e_{jk}}} {\parity{} \oplus \parity{ij} \oplus \parity{jk} \oplus \parity{ik}}$ in $\xorclauses \wedge \opttrans{\xorclauses}$.
 \qed
\end{proof}

\begin{lemma}
\label{Lem:OptEijUP}
Given a conjunction of xor-clauses $\xorclauses$ in 3-xor normal form and an xor-cycle
$\xorcycle{x_1,...,x_n}{y_1,...,y_n}{\parity{}}$, it holds that
$\xorclauses \wedge \opttrans{\xorclauses} \wedge (y_1 \equiv \parity{1}') \wedge ... \wedge (y_{n-1} \equiv \parity{n-1}') \UPDeriv
(y_n \equiv \parity{} \oplus \parity{1}' \oplus ... \oplus \parity{n-1}')$.
\end{lemma}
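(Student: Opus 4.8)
The plan is to induct on the cycle length $n=|X|$, after first strengthening the statement so that it covers \emph{every} xor-cycle $\xorcycleS{X}{Y}{\parity{}}$ in $\xorclauses \wedge \opttrans{\xorclauses}$ with $X \subseteq \VarsOf{\xorclauses}$ (the lemma is the special case of a cycle already occurring in $\xorclauses$). This strengthening is what makes the induction go through: the shorter cycles delivered by Lemma~\ref{Lem:CycleCompress} have inner-variable sets contained in the original $X$, hence still inside $\VarsOf{\xorclauses}$, so the induction hypothesis applies to them. Since xor-cycles are invariant under reordering of their clauses, I may always single out the one unknown outer variable as the ``last'' one, which is all the induction hypothesis needs.

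Base case $n=3$: here the cycle is a triangle, and I would observe that when $\opttransname$ eliminates whichever of $x_1,x_2,x_3$ comes first in its elimination order, the test on line~4 of Fig.~\ref{Fig:OptEij} succeeds, because the third triangle clause already contains the two surviving inner variables. Consequently line~9 contributes (up to reordering) the single transitivity clause $(y_1 \XX y_2 \XX y_3 \equiv \parity{})$, with $\parity{}=\parity{1}\xor\parity{2}\xor\parity{3}$. Substituting the unit assumptions $(y_1\equiv\parity{1}')$ and $(y_2\equiv\parity{2}')$ into this clause via \unitruleP{}/\unitruleN{} yields $(y_3 \equiv \parity{}\xor\parity{1}'\xor\parity{2}')$ directly.

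Inductive step $n>3$: I would apply Lemma~\ref{Lem:CycleCompress} to obtain an eliminated inner variable $x\in X$, two outer variables $y',y''\in Y$, a ternary bridge clause $(y \xor y' \xor y'' \equiv \parity{}')$ in $\xorclauses \wedge \opttrans{\xorclauses}$, and a shorter xor-cycle on the outer set $(Y\cup\set{y})\setminus\set{y',y''}$ with parity $\parity{}\xor\parity{}'$. The derivation then splits on whether $y_n$ is one of the consumed variables $y',y''$. If $y_n\notin\set{y',y''}$, then both $y',y''$ lie among the known $y_1,\dots,y_{n-1}$; I would substitute their unit values into the bridge clause to UP-derive a value for the fresh outer variable $y$, and then invoke the induction hypothesis on the shorter cycle (whose only unknown outer variable is $y_n$). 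If instead $y_n=y'$ (say), then the shorter cycle's only unknown outer variable is $y$; I would first derive a value for $y$ by the induction hypothesis, and then feed it together with the known value of $y''$ into the bridge clause to UP-derive $(y_n\equiv\dots)$.

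The remaining, routine work is the parity accounting: in both sub-cases I must check that the literal produced is $(y_n \equiv \parity{}\xor\parity{1}'\xor\dots\xor\parity{n-1}')$. This drops out because the shorter cycle carries parity $\parity{}\xor\parity{}'$ while the bridge clause contributes another $\parity{}'$, and the two copies of $\parity{}'$ cancel, re-exposing exactly the contributions of $y'$ and $y''$ and leaving $\parity{}\xor\bigoplus_{i<n}\parity{i}'$. A small compositionality remark is also needed: because a UP-derivation is a sequence, any unit clause (such as the derived value of $y$) that has already been UP-derived from the given clauses may subsequently serve as a premise, so the bridge-clause step and the induction-hypothesis derivation concatenate into one UP-derivation from $\xorclauses \wedge \opttrans{\xorclauses}$ and the original assumptions. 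I expect the main obstacle to be exactly this case split with its parity bookkeeping; the existence of the bridge and transitivity clauses is already guaranteed by Lemma~\ref{Lem:CycleCompress}, so the substance is in sequencing the unit-propagation steps and verifying the parities.
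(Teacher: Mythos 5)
Your proof is correct and follows essentially the same route as the paper's: induction on the cycle length, with the $n=3$ base case handled by the transitivity clause added on line~9, and the inductive step using Lemma~\ref{Lem:CycleCompress} to obtain the bridge clause $(y \oplus y' \oplus y'' \equiv \parity{}')$ and the shorter cycle, followed by exactly the paper's two-case split (your sub-cases correspond to the paper's Case~II and Case~I, respectively) and the same cancellation of the two copies of $\parity{}'$. Your explicit strengthening of the statement to all xor-cycles in $\xorclauses \wedge \opttrans{\xorclauses}$ with inner variables in $\VarsOf{\xorclauses}$ is a point of rigor the paper leaves implicit (its induction hypothesis is silently applied to such cycles), so this is a welcome clarification rather than a deviation.
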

\begin{proof}
Assume a conjunction $\xorclauses$ in 3-xor normal form and an xor-cycle
$\xorcycleS{X}{Y}{\parity{}}$, $X{=}\Tuple{x_1,...,x_n}$, $Y{=}\Tuple{y_1,...,y_n}$ in $\xorclauses$. Base case $
n{=}3$: $\xorcycleS{X}{Y}{\parity{}} {=} (x_1 \XX x_2 \XX y_1 \equiv \parity{1})
\wedge (x_2 \XX x_3 \XX y_2 \equiv \parity{2}) \wedge (x_1 \XX x_3 \XX y_3 \equiv
\parity{3})$. It is clear that the algorithm in Fig.\ref{Fig:OptEij} adds the
xor-clause $ (y_1 \XX y_2 \XX y_3 \equiv \parity{})$ to $\opttrans{\xorclauses}$.
It follows that $ \xorclauses \wedge \opttrans{\xorclauses} \wedge (y_1 \equiv \parity{1}') \wedge
(y_2 \equiv \parity{2}') \UPDeriv (y_3 \equiv \parity{} \XX \parity{1}' \XX
\parity{2}')$. Induction hypothesis: Lemma~\ref{Lem:OptEijUP} holds
for all xor-cycles $\xorcycleS{X'}{Y'}{\parity{}''}$ such that $|X'|=n-1$.
Induction step: By Lemma~\ref{Lem:CycleCompress},  $\xorclauses \wedge
\opttrans{\xorclauses}$ has an xor-cycle $ \xorcycleS{X\backslash \set{x}}{(Y
        \cup \set{y}) \backslash \set{y',y''}}{\parity{}' \XX \parity{}}$
 and
an xor-clause $(y \XX y'
         \XX y'' \equiv \parity{}')$. There are two cases to consider:
\begin{itemize}
\item Case I: $y_n = y'$ or $y_n = y''$.
Without loss of generality, we can consider only the case where $y'' = y_n$ and
$y' = y_{n-1}$. In this case, the xor-clause $ (y \XX y_{n-1} \XX y_n \equiv
        \parity{}')$ is in $\xorclauses \wedge \opttrans{\xorclauses}$. By induction hypothesis $
\xorclauses \wedge \opttrans{\xorclauses} \wedge (y_1 \equiv \parity{1}') \wedge ... \wedge
(y_{n-2} \equiv \parity{n-2}') \UPDeriv (y \equiv \parity{} \oplus \parity{}'
        \oplus \parity{1}'
            \oplus ... \oplus \parity{n-2}') $. It follows that $ (y \XX
                y_{n-1} \XX y_n \equiv \parity{}') \wedge (y_{n-1} \equiv
                    \parity{n-1}') \wedge (y \equiv \parity{} \oplus \parity{}' \oplus
                        \parity{1}' \oplus ... \oplus \parity{n-2}')
 \UPDeriv (y_n \equiv \parity{} \XX \parity{1}' \XX ... \XX \parity{n-1}')$.

 \item Case II: $y_n \in Y \backslash \set{y',y''}$. 
Again without loss of generality, we can consider only the case where $y'' = y_{n-1} $, and
$y' = y_{n-2}$, so the xor-clause$ (y \XX y_{n-2} \XX y_{n-1} \equiv
        \parity{}')$ is in $\xorclauses \wedge \opttrans{\xorclauses}$.  
It follows that $ (y \XX y_{n-2} \XX y_{n-1} \equiv \parity{}') \wedge (y_{n-2} \equiv \parity{n-2}') \wedge (y_{n-1} \equiv \parity{n-1}') \UPDeriv (y \equiv \parity{}' \XX \parity{n-1}' \XX \parity{n-2}')$.
Then by induction hypothesis $\xorclauses \wedge \opttrans{\xorclauses} \wedge (y_1 \equiv \parity{1}')\wedge ... \wedge (y_{n-3} \equiv \parity{n-3}') \wedge (y \equiv
            \parity{}' \XX \parity{n-2}' \XX \parity{n-1}')
\UPDeriv  (y_n \equiv
        \parity{} \XX \parity{1}' \XX ... \XX \parity{n-1}')$.
\end{itemize}
\qed
\end{proof}

\begin{retheorem}{\ref{Thm:OptTransCorrectness}}
If $\xorclauses$ is an xor-clause conjunction in 3-xor normal form,
then $\opttrans{\xorclauses}$ is an \EC{}-simulation formula for $\xorclauses$.
\end{retheorem}
\begin{proof}
We first prove that the satisfying truth assignments of $\xorclauses$ are
exactly the ones of $\xorclauses \land \opttrans{\xorclauses}$ when projected
to $\VarsOf{\xorclauses}$. It holds by construction that
$\opttrans{\xorclauses} \subseteq \eijtrans{\xorclauses}$ and since
$\eijtrans{\xorclauses}$ is an \EC{}-simulation formula for $\xorclauses$ by
Theorem~\ref{Thm:EijTransCorrectness}, then the satisfying truth assignments of
$\xorclauses$ are exactly the ones of $\xorclauses \land
\opttrans{\xorclauses}$ when projected to $\VarsOf{\xorclauses}$.

We now prove that if $\IL$ is \EC-derivable from $\xorclauses \land \AL_1
   \land ... \land \AL_k$, then $\IL$ is \UP-derivable from $\xorclauses
            \land \eijtrans{\xorclauses} \land \AL_1 \land ...  \land
    \AL_k$.
 Assume that a literal $\IL$ is \EC-derivable from $ \psi = \xorclauses \land
 \AL_1 \land ... \land \AL_k$.  This implies that there is an
 \EC-derivation \ECderivation{} from $\psi$ and the literal $\IL$ is derived
 from the xor-clauses $C_1, ..., C_n$ in \ECderivation{} using one of the
 inference rules of \EC.  We prove by structural induction that $\IL$ is
 \UP-derivable from $\psi' = \xorclauses \wedge \eijtrans{\xorclauses} \land \AL_1 \land ...
 \land \AL_k$. The induction hypothesis is that there is a \UP-derivation
 \UPderivation{} from $\psi'$ such that the xor-clauses $C_1,...,C_n$ are in
 \UPderivation{}.  The inference rules \UnConflictRuleP{},  \TerBinRuleP{}, and
 \BinUnRuleP{} are special cases of \unitruleP{} and \unitruleN{}, 
  and \ConflictRuleP{} can be simulated by \TerUnRule{} and \UnConflictRuleP{},
 so it
 suffices to show that the inference rule \TerUnRule{} 
 can be simulated with \unitruleP{} and \unitruleN{}. 
  In the case that $\IL = (y \equiv \parity{} \oplus \parity{1}' \oplus ... \oplus \parity{n-1}') $ is derived using
the inference rule \TerUnRule{}, by Lemma~\ref{Lem:XorImply} there must be an
xor-cycle $\xorcycle{x_1,...,x_n}{y_1,...,y_{n-1},y}{\parity{}}$ in
$\xorclauses$
 such
 that for each $ y_i \in \set{y_1,...,y_{n-1}}$ it holds that $\psi \ECDeriv
 (y_i \equiv \parity{i}')$. By induction hypothesis it holds that $\psi' \UPDeriv (y_i \equiv \parity{i}')$ for each $ y_i
 \in \set{y_1,...,y_{n-1}}$. By Lemma~\ref{Lem:OptEijUP} and due to the existence of 
$\xorcycle{x_1,...,x_n}{y_1,...,y_{n-1},y}{\parity{}}$, it holds
that $ \xorclauses \wedge \opttrans{\xorclauses} \wedge (y_1 \equiv \parity{1}') \wedge ... \wedge (y_{n-1} \equiv \parity{n-1}') \UPDeriv
(y \equiv \parity{} \oplus \parity{1}' \oplus ... \oplus \parity{n-1}')$.
By induction it follows that $\IL$ is UP-derivable from $\xorclauses \wedge \opttrans{\xorclauses} \land \AL_1 \land ... \land \AL_k$.
\qed
\end{proof}

\end{document}